
\documentclass[review,1p,times,twocolumn]{elsarticle}








\usepackage{amsfonts}
\usepackage{amsmath}
\usepackage{amssymb}
\usepackage{amsthm}
\usepackage[ruled,titlenumbered,lined]{algorithm2e}
\usepackage{varioref}
\usepackage{pdflscape}
\usepackage{lscape}
\usepackage{color}
\usepackage{subfig}

\usepackage{caption}

\usepackage{booktabs}
\usepackage{multirow}
\usepackage{multicol}

%


\newtheorem{definition}{Definition}
\newtheorem{property}{Property}
\newtheorem{theorem}{Theorem}

\newtheorem{proposition}{Proposition}

\makeatletter
\newcommand*{\rom}[1]{\expandafter\@slowromancap\romannumeral #1@}
\makeatother



\journal{arXiv.org}

\begin{document}

\begin{frontmatter}



\title{A linear programming based heuristic framework for min-max regret combinatorial optimization problems with interval costs}


\author[dep]{Lucas Assun\c{c}\~ao\corref{cor1}}
\ead{lucas-assuncao@ufmg.br}
\author[dcc]{Thiago F. Noronha}
\ead{tfn@dcc.ufmg.br}
\author[utt]{Andr\'ea Cynthia Santos}
\ead{andrea.duhamel@utt.fr}
\author[ufc]{Rafael Andrade}
\ead{rca@lia.ufc.br}
\cortext[cor1]{Corresponding author.}

\address[dep]{
  Departamento de Engenharia de Produ\c{c}\~ao, Universidade Federal de Minas Gerais\\
  Avenida Ant\^onio Carlos 6627, CEP 31270-901, Belo Horizonte, MG, Brazil\\
}

\address[dcc]{
  Departamento de Ci\^encia da Computa\c{c}\~ao, Universidade Federal de Minas Gerais\\
  Avenida Ant\^onio Carlos 6627, CEP 31270-901, Belo Horizonte, MG, Brazil\\
}

\address[utt]{
  ICD-LOSI, UMR CNRS 6281, Universit\'e de Technologie de Troyes\\
  12, rue Marie Curie, CS 42060, 10004, Troyes CEDEX, France\\
}

\address[ufc]{
  Departamento de Estat\'{\i}stica e Matem\'{a}tica Aplicada, Universidade Federal do Cear\'{a}\\
  Campus do Pici - Bloco 910, 60455-900, Fortaleza, CE, Brazil\\
}

\begin{abstract}
  This work deals with a class of problems under interval data
  uncertainty, namely \emph{interval robust-hard} problems, composed of
  interval data min-max regret generalizations of classical
  NP-hard combinatorial problems modeled as 0-1 integer linear programming problems. These problems are more challenging than other interval data
  min-max regret problems, as solely computing the cost of any feasible
  solution requires solving an instance of an NP-hard problem.
  The state-of-the-art exact algorithms in the literature are based on the generation
  of a possibly exponential number of cuts. As each cut
  separation involves the resolution of an NP-hard classical optimization problem,
  the size of the instances that can be solved efficiently is
  relatively small.
  To smooth this issue, we present a modeling technique for interval robust-hard problems in the context of a heuristic framework.
  The heuristic obtains feasible solutions by exploring dual information of a linearly relaxed
  model associated with the classical optimization problem counterpart.
  Computational experiments for interval data min-max regret versions of the restricted shortest path problem and the set covering problem show
  that our heuristic is able to find optimal or near-optimal solutions
  and also improves the primal bounds obtained by a state-of-the-art
  exact algorithm and a 2-approximation procedure for interval data min-max regret problems.
\end{abstract}

\begin{keyword}
  Robust optimization \sep Matheuristics \sep Benders' decomposition


\end{keyword}

\end{frontmatter}
\section{Introduction}
\label{s_intro}
Robust optimization \cite{Kouvelis1997} is an alternative to stochastic
programming \cite{Spall03} in which the variability of the data is represented 
by deterministic values in the context of \emph{scenarios}.
A scenario corresponds to a parameters assignment, i.e., a value is fixed for each parameter subject to uncertainty.
The two main approaches adopted to model robust optimization problems are the \emph{discrete scenarios model} and the \emph{interval data model}.
In the former, a discrete set of possible scenarios is considered. In the latter, the uncertainty referred to a parameter is represented by a
continuous interval of possible values. Differently from the discrete scenarios model, the infinite many possible scenarios that arise
in the interval data model are not explicitly given.
Nevertheless, in both models, a classical (i.e., parameters known in advance) optimization problem takes place
whenever a scenario is established.

With respect to robust optimization criteria, the \textit{min-max regret} (also known as \emph{robust deviation}
criterion) is one of the most used in the literature (see, e.g.,
\cite{Averbakh05,Mo06,MontemanniBarta07,Pereira11}). The
\emph{regret} of a solution in a given scenario is defined as the
cost difference between such solution and an optimal one in this scenario. In turn, the robustness
cost of a solution is defined as its maximum regret over all
scenarios. Min-max regret problems aim at finding a solution with the
minimum robustness cost, which is referred to as a \textit{robust solution}.
We refer to \cite{Kouvelis1997} for details on other robust optimization criteria.

Robust optimization versions of several combinatorial optimization
problems have been studied in the literature, addressing, for example, uncertainties over costs. Handling uncertain
costs brings an extra level of difficulty, such that even polynomially solvable problems become NP-hard in their corresponding robust
versions \cite{Mo06,Pereira11,YaKaPi01,MonteGa04c, Kasperski08}. A recent trend in this field is to investigate robust optimization
problems whose classical counterparts are already
NP-hard. We refer to these problems as \textit{robust-hard} problems.
The interval data min-max regret restricted shortest path problem, introduced in this study, belongs to this class of problems, along
with interval data min-max regret versions of the traveling salesman problem \cite{MontemanniBarta07}, the 0-1 knapsack problem \cite{Deineko2010} and
the set covering problem \cite{Averbakh13}.

This work addresses \emph{interval min-max regret} problems, which consist of min-max regret versions of combinatorial problems with interval costs.
Precisely, we consider interval data min-max regret generalizations of classical combinatorial problems modeled by means of 0-1 Integer Linear Programming
(ILP). Notice that a large variety of combinatorial problems are modeled as 0-1 ILP problems, including
(i) polynomially solvable problems, such as the shortest path problem, the minimum spanning tree
problem and the assignment problem, and (ii) NP-hard problems, such as the 0-1 knapsack problem, the set
covering problem, the traveling salesman problem and the restricted shortest path problem \cite{Garey79,Handler80}.
In this study, we are particularly interested in a subclass of interval min-max regret problems, namely
\emph{interval robust-hard problems}, composed of interval data min-max regret versions of classical NP-hard combinatorial problems
as those aforementioned in (ii).

Aissi at al.~\cite{Aissi09} (also see \cite{Averbakh01}) showed that, for any interval min-max regret problem (including interval robust-hard problems),
the robustness cost of a solution can be computed by solving a single instance of the classical optimization problem counterpart
in a particular scenario. Therefore, one does not have to consider all the infinite many possible scenarios during the search for a robust solution,
but only a subset of them, one for each feasible solution.
Notice, however, that, in the case of interval robust-hard problems, computing the cost of a solution still involves solving an NP-hard problem.
Thus, these problems are more challenging than other interval min-max regret problems.

The state-of-the-art exact
algorithms used to solve these problems are based on the
generation of a possibly exponential number of cuts (see, e.g.,
\cite{MontemanniBarta07,Averbakh13}). Since each cut separation
implies solving an instance of an NP-hard problem (which corresponds
to the classical counterpart of the robust optimization problem
considered), the size of the instances efficiently solvable is considerably smaller than
that of the instances solved for the corresponding classical counterparts.
In fact, to the best of our knowledge, no work in the literature
provides a compact ILP formulation (with a polynomial number of variables and constraints) for any interval robust-hard problem.

Although we do not close the aforementioned issues in this work, we smooth them by presenting an alternative way to handle interval robust-hard problem.
We propose a Linear Programming (LP) based heuristic framework inspired by the modeling technique introduced in
\cite{YaKaPi01,Karasan01} and formalized by Kasperski \cite{Kasperski08}.
The heuristic framework is suitable to tackle interval min-max regret problems in general and consists in solving a
Mixed Integer Linear Programming (MILP) model based on the dual of a linearly relaxed formulation for the classical optimization problem counterpart.
An optimal solution for the model solved by our heuristic is not necessarily optimal for the original robust optimization
problem. However, computational experiments for interval data min-max regret versions of the restricted shortest path problem and the set covering
problem \cite{Averbakh13} showed that the heuristic is able to find optimal or near-optimal
solutions, and that it improves the upper bounds obtained by both a state-of-the-art exact algorithm and a 2-approximation procedure for interval min-max regret problems.

The theoretical quality of the solutions obtained by the heuristic in general is beyond the scope of this work, as it may rely on specific problem
dependent factors, such as the strength of the formulation adopted to model the classical counterpart problem within the heuristic framework.
In fact, as we will discuss in the sequel, the study of approximative procedures for interval min-max regret problems is a relatively
unexplored field, and not much is known.

The remainder of this work is organized as follows. Related works on robust-hard problems are discussed in Section~\ref{s_rw}.
In Section~\ref{s_general_robust_pb}, we present a standard modeling technique for interval min-max regret problems and a state-of-the-art
logic-based Benders' framework for solving problems in this class.
In this study, the solutions obtained by this exact approach are used to evaluate the quality of the solutions found by the proposed
LP based heuristic framework, which is presented in Section~\ref{s_heuristic}.
In Section~\ref{s_case_studies}, we formally describe the interval robust-hard problems used as case studies for the heuristic framework.
In this same section, we introduce the interval data min-max regret restricted shortest path problem, along with a brief motivation and related works.
Computational experiments are showed in Section~\ref{s_results}, while concluding remarks and future work directions are discussed in the last section.
\section{Related works}
\label{s_rw}
As far as we know, Montemanni et al. \cite{MontemanniBarta07} (also see \cite{MonteTech06}) were the first to address an
interval robust-hard problem. The authors introduced the interval data min-max regret traveling salesman problem, along with a
mathematical formulation containing an exponential number of constraints. Moreover, three exact algorithms were presented and computationally compared
at solving the proposed formulation: a \emph{branch-and-bound}, a \emph{branch-and-cut}
and a logic-based Benders' decomposition~\cite{Hooker03} algorithm. The latter algorithm
has been widely used to solve robust optimization problems whose
classical counterparts are polynomially solvable (see, e.g., \cite{Mo06,Pereira11,MonteGa04c}).
Computational experiments showed that the logic-based Benders' algorithm outperforms the other exact algorithms for the interval data min-max regret
traveling salesman problem.

Later on, Pereira and Averbakh \cite{Averbakh13} introduced the interval data min-max regret set covering problem. The authors proposed a mathematical
formulation for the problem that is similar to the one proposed by \cite{MontemanniBarta07} for the interval data min-max regret traveling salesman problem.
As in \cite{MontemanniBarta07}, the formulation has an exponential number of constraints.
They also adapted the logic-based Benders' algorithm of \cite{Mo06,MontemanniBarta07,MonteGa04c} to this problem and presented
an extension of the method that aims at generating multiple Benders' cuts per iteration of the algorithm. Moreover, the work presents an exact
approach that uses Benders' cuts in the context of a \emph{branch-and-cut} framework. Computational experiments showed that such approach, as well as the
extended logic-based Benders' algorithm, outperforms the standard logic-based Benders' algorithm at solving the instances considered. 
This robust version of the set covering problem was also addressed in \cite{Amadeu15}, where the authors propose scenario-based heuristics with
path-relinking.

A few works also deal with robust optimization versions of the 0-1 knapsack problem.
For instance, the studies \cite{Kouvelis1997} and \cite{Yu96} address a version of the problem where the uncertainty over each item profit is represented by a
discrete set of possible values. In these works, the absolute robustness criterion is considered. In \cite{Yu96}, the author proved that this
version of the problem is strongly NP-hard when the number of possible scenarios is unbounded and pseudo-polynomially solvable for a bounded number of
scenarios. Kouvelis et al.~\cite{Kouvelis1997} also studied a min-max regret version of the problem that considers a discrete set of scenarios of item
profits. They provided a pseudo-polynomial algorithm for solving the problem when the number of possible scenarios is bounded.
When the number os scenarios is unbounded, the problem becomes strongly NP-hard and there is no approximation scheme for it \cite{Aissi07}.

More recently, Feizollahi and Averbakh~\cite{Feizollahi2014} introduced the min-max regret quadratic assignment problem with interval flows,
which is a generalization of the classical quadratic assignment problem in which material flows between facilities are uncertain and vary in given intervals. 
Although quadratic, the problem presents a structure that is very similar to the robust versions of combinatorial problems
we address in this work. The authors proposed two mathematical formulations and adapted the logic-based Benders' algorithm of
\cite{Mo06,MontemanniBarta07,MonteGa04c} to solve them through the linearization of the corresponding master problems.
They also developed a hybrid approach which combines Benders' decomposition with heuristics.

Regarding heuristics for interval robust-hard problems, a simple and efficient
scenario-based procedure to tackle interval min-max regret problems in general was proposed in~\cite{Kasperski05}
and successfully applied in several works (see, e.g., \cite{MonteTech06,Kasperski05,Kasperski06b}).
The procedure, called Algorithm Mean Upper (AMU), consists in solving the corresponding classical optimization problem in
two specific scenarios: the so called \emph{worst-case scenario},
where the cost referred to each binary variable is set to its upper
bound, and the \emph{mid-point scenario}, where the cost of the binary variables
are set to the mean values referred to the bounds of the respective cost intervals.
With this heuristic, one can obtain a feasible solution
for any interval min-max regret optimization problem (including interval robust-hard
problems) with the same worst-case asymptotic complexity of solving
an instance of the classical optimization problem
counterpart. Moreover, it is proven in \cite{Kasperski06b} that this
algorithm is 2-approximative for any interval min-max regret optimization
problem. Notice that AMU does not run in polynomial time for interval robust-hard problems, unless P = NP.

The study of approximative procedures for interval min-max regret problems in general is a relatively
unexplored field, and not much is known. Conde~\cite{Conde2010} proved that AMU gives a 2-approximation for an even broader class of min-max regret
problems. Precisely, while the works of Kasperski et al.~\cite{Kasperski05,Kasperski06b} only address
combinatorial (finite and discrete) min-max regret optimization models, Conde~\cite{Conde2010} extends the 2-approximation result for models with compact
constraint sets in general, including continuous ones.
Recently, some research has been conducted to refine the constant factor of the aforementioned approximation. For example, in~\cite{Conde2012}, the author
attempts to tighten this factor of 2 through the resolution of a robust optimization problem over a reduced uncertainty cost set.
Moreover, in \cite{Chassein2015}, the authors introduced a new bound that gives an instance dependent performance guarantee of the \emph{mid-point scenario}
solution for interval data min-max regret versions of combinatorial problems. They show that the new performance ratio is at most 2, and the bound is successfully applied to solve
the interval data min-max regret shortest path problem~\cite{Kouvelis1997} within a \emph{branch-and-bound} framework.

Kasperski and Zieli\'{n}ski~\cite{Kasperski07} also developed a Fully Polynomial Time Approximation Scheme (FPTAS)
for interval min-max regret problems. However, this FPTAS relies on two very restrictive conditions: (i) the problem tackled must present a
pseudopolynomial algorithm, and (ii) the corresponding classical counterpart has to be polynomially solvable. Condition (ii) comes from the fact that
the aforementioned FPTAS uses AMU within its framework. Notice that, from (ii), this FPTAS does not naturally hold for
interval robust-hard problems.

The existence of efficient approximative algorithms is closely related to another almost unexplored aspect of interval robust-hard
problems, which is their computational complexity. Although NP-hard by definition, whether these problems are necessarily strongly NP-hard
or not is still an open issue. 
\section{Modeling and solving interval min-max regret problems}
\label{s_general_robust_pb}
In this section, we discuss a standard modeling technique in the literature of interval min-max regret problems, as well as a state-of-the-art
exact algorithm to solve problems in this class. To this end, consider $\mathcal{G}$, a generic 0-1 ILP combinatorial problem defined as follows.
\begin{eqnarray}
  \mbox{$(\mathcal{G})\quad$}\min && cy \label{bilp00}\\
  s.t.  && Ay \geq b \label{bilp01}\\
	&& y \in \{0,1\}^n \label{bilp02}.
\end{eqnarray}
The binary variables are represented by an $n$-dimensional column vector $y$, whereas their corresponding cost values are given by an
$n$-dimensional row vector $c$. Moreover, $b$ is an $m$-dimensional
column vector, and $A$ is an $m \times n$ matrix.
The feasible region of $\mathcal{G}$ is given by $\Omega = \{y:\, Ay \geq b, \, y \in \{0,1\}^n\}$.

Let $\mathcal{R}$ be an interval data min-max regret
robust optimization version of $\mathcal{G}$, where a continuous cost interval $[l_i,u_i]$, with $l_i,u_i \in \mathbb{Z}_+$ and $l_i \leq u_i$,
is associated with each binary variable $y_i$, $i =1,\ldots, n$. The following definitions describe $\mathcal{R}$ formally.
\begin{definition}
A \emph{scenario} $s$ is an assignment of costs to the binary variables, i.e., a cost $c_{i}^s \in [{l}_{i},{u}_{i}]$ is fixed for all $\, y_i$, $i =1,\ldots, n$.
\end{definition}
Let $\mathcal{S}$ be the set of all possible cost scenarios, which consists of the cartesian product of the continuous
intervals $[l_i,u_i]$, $i = 1, \ldots, n$.
The cost of a solution $y \in \Omega$ in a scenario $s \in \mathcal{S}$ is given by $c^sy = \sum\limits_{i = 1}^{n}{c^s_iy_i}$.
\begin{definition}
A solution $opt(s) \in \Omega$ is said to be \emph{optimal} for a scenario $s \in \mathcal{S}$ if it has the
smallest cost in $s$ among all the solutions in $\Omega$, i.e., $opt(s) = \arg\min\limits_{y \in \Omega}{c^sy}$.
\end{definition}
\begin{definition}
The \emph{regret (robust deviation)} of a solution $y \in \Omega$ in a scenario $s \in \mathcal{S}$, denoted by $r_y^s$,
is the difference between the cost of $y$ in $s$ and the cost of $opt(s)$ in $s$, i.e.,
$r_y^s = c^sy - c^s{opt(s)}$.
\end{definition}
\begin{definition}
The \emph{robustness cost} of a solution $y \in \Omega$, denoted by $R_y$, is the maximum
regret of $y$ among all possible scenarios, i.e., $R_y = \max\limits_{s \in \mathcal{S}}{r^{s}_y}$.
\end{definition}
\begin{definition}
A solution $y^* \in \Omega$ is said to be \emph{robust} if it has the smallest robustness cost
among all the solutions in $\Omega$, i.e., $y^* = \arg\min\limits_{y \in \Omega}{R_{y}}$.
\end{definition}
\begin{definition}
The \emph{interval data min-max regret problem} $\mathcal{R}$ consists in finding a robust solution.
\end{definition}
For each scenario $s \in \mathcal{S}$, let $\mathcal{G}(s)$ denote the corresponding 0-1 ILP problem $\mathcal{G}$ under the cost vector $c^s \in \mathbb{R}_+^n$
referred to $s$.
Also consider an $n$-dimensional column vector $x$ of binary variables. Then, $\mathcal{R}$ can be generically modeled as follows.
\begin{eqnarray}
  \mbox{$(\mathcal{R})\quad$}\min && \max\limits_{s\in \mathcal{S}}\;(c^sy - \overbrace{\min\limits_{x \in \Omega}{c^sx}}^{\mathcal{G}(s)}) \label{g00} \\
  s.t. && y \in \Omega \label{d00}.
\end{eqnarray}

\begin{theorem}[Aissi et al. \cite{Aissi09}]
\label{teo01}
The robust deviation (regret) of any feasible solution ${\bar y} \in \Omega$ is maximum in the scenario $s({{\bar y}})$ induced by ${\bar y}$, defined as follows:
\[
\textrm{for all } \; i \in \{1,\ldots,n\},\quad c^{s({{\bar y}})}_i = \left\{
\begin{array}{lll}
        \; u_i, & \textrm{if $\,{{\bar y}}_i = 1$,} \\
        \; l_i, & \textrm{if $\,{{\bar y}}_i = 0$.} 
\end{array}\right. \]
\end{theorem}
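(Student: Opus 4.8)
\section*{Proof plan}

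The plan is to establish the two inequalities $r_{\bar y}^{s(\bar y)} \le R_{\bar y}$ and $r_{\bar y}^{s} \le r_{\bar y}^{s(\bar y)}$ for every $s \in \mathcal{S}$; together they force $R_{\bar y} = \max_{s \in \mathcal{S}} r_{\bar y}^{s} = r_{\bar y}^{s(\bar y)}$, which is the assertion of the theorem. The first inequality is immediate, since $s(\bar y)$ is itself an element of $\mathcal{S}$ and $R_{\bar y}$ is the maximum of $r_{\bar y}^{s}$ over all such scenarios; hence the whole content of the statement is the second inequality.

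The starting observation is that, for a \emph{fixed} solution $x \in \Omega$, the quantity $c^s \bar y - c^s x$ decomposes coordinatewise as $\sum_{i=1}^{n}\bigl(c_i^s \bar y_i - c_i^s x_i\bigr)$, and only the indices with $\bar y_i \neq x_i$ contribute: the $i$-th term equals $+c_i^s$ when $\bar y_i = 1,\, x_i = 0$, equals $-c_i^s$ when $\bar y_i = 0,\, x_i = 1$, and vanishes otherwise. Consequently, to make $c^s \bar y - c^s x$ as large as possible over $s \in \mathcal{S}$, one should push $c_i^s$ up to its bound $u_i$ on every index with $\bar y_i = 1$ and down to $l_i$ on every index with $\bar y_i = 0$ — which is precisely the definition of $s(\bar y)$, and, crucially, does not depend on $x$. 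This yields the pointwise bound $c^s \bar y - c^s x \le c^{s(\bar y)} \bar y - c^{s(\bar y)} x$ for every $s \in \mathcal{S}$ and every $x \in \Omega$.

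To finish, I would rewrite the regret so that the feasible set no longer varies with the scenario: $r_{\bar y}^{s} = c^s \bar y - \min_{x \in \Omega} c^s x = \max_{x \in \Omega}\bigl(c^s \bar y - c^s x\bigr)$, the maximum now being over the same set $\Omega$ for all $s$. Inserting the pointwise bound inside this maximum gives $r_{\bar y}^{s} \le \max_{x \in \Omega}\bigl(c^{s(\bar y)} \bar y - c^{s(\bar y)} x\bigr) = r_{\bar y}^{s(\bar y)}$, as required.

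The only delicate point — and the spot where a hasty argument would go wrong — is the dependence of $opt(s)$ on $s$: one cannot naively "optimize $c^s$ coordinate by coordinate" inside $c^s \bar y - c^s\, opt(s)$, because changing $s$ also changes $opt(s)$. Recasting the regret as a maximum over the fixed polytope $\Omega$ removes this coupling, because the pointwise inequality $c^s \bar y - c^s x \le c^{s(\bar y)} \bar y - c^{s(\bar y)} x$ is uniform in $x$. Everything else is the routine term-by-term comparison of the cost vectors $c^s$ and $c^{s(\bar y)}$ sketched above.
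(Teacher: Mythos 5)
Your proposal is correct. The paper itself gives no proof of Theorem~\ref{teo01} --- it is quoted from Aissi et al.~\cite{Aissi09} (and also \cite{Averbakh01}) --- so there is nothing internal to compare against; your argument is the standard one behind the cited result. The two essential steps are exactly right: rewriting $r_{\bar y}^{s} = \max_{x \in \Omega}\bigl(c^{s}\bar y - c^{s}x\bigr)$ so that the feasible set no longer depends on the scenario, and then observing that the coordinatewise bound $c_i^{s}(\bar y_i - x_i) \leq c_i^{s(\bar y)}(\bar y_i - x_i)$ holds uniformly in $x$ (using $l_i \leq c_i^{s} \leq u_i$ in the two cases $\bar y_i = 1, x_i = 0$ and $\bar y_i = 0, x_i = 1$), which you correctly flag as the point where a naive optimization of $c^{s}\bar y - c^{s}\,opt(s)$ would fail because $opt(s)$ moves with $s$.
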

Theorem~\ref{teo01}, which is also stated in \cite{Averbakh01}, reduces the number of scenarios to be considered during the search for a robust solution.
Accordingly, ${\mathcal{R}}$ can be rewritten taking into account only the scenario induced by the solution that the
$y$ variables define. This scenario, referred to as $s(y)$, is represented by the cost vector $c^{s(y)} ={\Big( l_1 + (u_1 - l_1)y_1, \ldots, l_n + (u_n - l_n)y_n \Big)}$.
Then, $\mathcal{R}$ can be rewritten as 
\begin{eqnarray}
  \mbox{$(\tilde{\mathcal{R}})\quad$}\min && \Big(c^{s(y)}y - \overbrace{\min\limits_{x \in \Omega}{c^{s(y)}x}}^{{\mathcal{G}}({s({y})})}\Big) \label{r00} \\
  s.t. && y \in \Omega \label{r01}.
\end{eqnarray}

In order to obtain an MILP formulation for $\mathcal{R}$, we reformulate $\tilde{\mathcal{R}}$ according to \cite{Aissi09}. Precisely, we add a free
variable $\rho$ and linear constraints that explicitly bound $\rho$ with respect to all the feasible solutions that $x$ can represent.
The resulting MILP formulation is provided from \eqref{milp00} to \eqref{milp03}.
\begin{eqnarray}
  \mbox{$(\mathcal{F})\quad$}\min && (\sum\limits_{i = 1}^{n}{u_iy_i} - \rho) \label{milp00} \\
  s.t. && \rho \leq {\sum\limits_{i = 1}^{n}{(l_i + (u_i - l_i)y_i)\bar{x}_i}} \quad \forall\, \bar{x} \in \Omega, \label{milp01} \\
        && y \in \Omega,   \label{milp02} \\
        && \rho \mbox{ free}. \label{milp03}
\end{eqnarray}

Constraints (\ref{milp01}) ensure that $\rho$ does not exceed the value related to the inner minimization in (\ref{r00}).
Note that, in (\ref{milp01}), $\bar{x}$ is a constant vector, one for each solution in $\Omega$. These constraints are tight whenever
$\bar{x}$ is optimal for the classical counterpart problem $\mathcal{G}$ in the scenario $s(y)$.
Constraints (\ref{milp02}) define the feasible region referred to the $y$ variables, and constraint (\ref{milp03}) gives the domain of the variable $\rho$.

The number of constraints (\ref{milp01}) corresponds to the number of feasible solutions in $\Omega$.
As the size of this region may grow exponentially with the number of binary variables,
this fomulation is particularly suitable to be handled by decomposition methods, such as the logic-based Benders' decomposition \cite{Hooker03} algorithm detailed below.
\subsection{A logic-based Benders' decomposition algorithm}
\label{s_benders}
Benders' decomposition method was originally proposed in \cite{Benders62} (also see \cite{Geoffrion72}) to tackle MILP problems by exploring
duality theory properties. Several methodologies were later studied to improve the convergence speed of the method (see, e.g., \cite{McDaniel77,Magnanti81,Fischetti10}).
More recently, Hooker and Ottosson \cite{Hooker03} introduced the idea of a \emph{logic-based Benders' decomposition}, which is
a Benders-like decomposition approach suitable for a broader class of problems. This specific method is intended to address any
optimization problem by devising valid cuts solely based on logic inference.

The logic-based Benders' algorithm presented below is a state-of-the-art exact method for interval min-max regret problems
\cite{Mo06,MontemanniBarta07,Pereira11,MonteGa04c,Averbakh13}. The algorithm solves formulation $\mathcal{F}$, given by (\ref{milp00})-(\ref{milp03}),
by assuming that, since several of constraints~(\ref{milp01}) might be inactive at optimality, they can be generated on demand whenever they are violated.
The procedure is described in Algorithm~\ref{Benders01}.
Let ${\Omega}^{\psi} \subseteq {\Omega}$ be the set of solutions $\bar{x} \in \Omega$ (Benders' cuts) available at an iteration $\psi$. Also let
$\mathcal{F}^{\psi}$ be a relaxed version of $\mathcal{F}$ in which constraints (\ref{milp01}) are replaced by
\begin{equation}
 \rho \leq \sum\limits_{i = 1}^{n}{( l_{i} + (u_{i} - l_{i})y_{i}){\bar x}_{i}} \quad \forall\, {\bar x} \in {\Omega}^{\psi}. \label{milp04}
\end{equation}

\noindent Thus, the relaxed problem $\mathcal{F}^{\psi}$, called \emph{master problem}, is defined by (\ref{milp00}), (\ref{milp02}), (\ref{milp03}) and (\ref{milp04}).

Let $ub^{\psi}$ keep the best upper bound found (until an iteration $\psi$) on the solution of $\mathcal{F}$. 
Notice that, at the beginning of Algorithm~\ref{Benders01}, ${\Omega}^{1}$ contains the initial Benders' cuts available, whereas $ub^1$ keeps
the initial upper bound on the solution of $\mathcal{F}$.
In this case, ${\Omega}^{1}=\emptyset$ and $ub^1 := + \infty$.
At each iteration ${\psi}$, the algorithm obtains a solution by solving
a corresponding master problem  $\mathcal{F}^{\psi}$ and seeks a constraint (\ref{milp01}) that is most violated by this solution.
Initially, no constraint (\ref{milp04}) is considered, since ${\Omega}^{1}=\emptyset$. An initialization step is then necessary
to add at least one solution to ${\Omega}^{1}$, thus avoiding unbounded solutions during the first resolution of the master problem.
To this end, it is computed an optimal solution for the worst-case scenario $s_u$, in which $c^{s_u} = u$ (Step \rom{1}, Algorithm~\ref{Benders01}).
\begin{algorithm}[!ht]
\caption{Logic-based Benders' algorithm.}
\KwIn{Cost intervals $[l_i,u_i]$ referred to $y_i$, $i =1,\ldots, n$.}
\KwOut{A robust solution for $\mathcal{F}$, and its corresponding robustness cost.}
  $\psi \leftarrow 1$; $ub^1 \leftarrow + \infty$; ${\Omega}^{1} \leftarrow \emptyset$;\\
 \textbf{Step \rom{1}. (Initialization)} \\
 Find an optimal solution ${\bar x}^1 = opt(s_{u})$ for the worst-case scenario $s_{u}$;\\
  ${\Omega}^{1} \leftarrow {\Omega}^{1} \cup \{{\bar x}^1\}$;\\
  \textbf{Step \rom{2}. (Master problem)}\\ Solve the relaxed problem $\mathcal{F}^{\psi}$, obtaining a solution $({\bar y}^{\psi},{\bar \rho}^{\psi})$;\\
  \textbf{Step \rom{3}. (Slave problem)}\\ Find an optimal solution ${\bar x}^{\psi} = opt(s({{\bar y}^{\psi}}))$ for the scenario
  $s({{\bar y}^{\psi}})$ induced by ${\bar y}^{\psi}$
  and use it to compute $R_{{\bar y}^{\psi}}$, the robustness cost of ${\bar y}^{\psi}$;\\
 \textbf{Step \rom{4}. (Stopping condition)} \\
 $lb^{\psi} \leftarrow {u}{\bar y}^{\psi} - {\bar \rho}^{\psi}$;\\
 \If{$lb^{\psi} \geq {R_{{\bar y}^{\psi}}}$}{Return $({\bar y}^{*},R^*)$;}
 \Else{
 $ub^{\psi+1} \leftarrow ub^{\psi} \leftarrow \min \{ub^{\psi}, R_{{\bar y}^{\psi}}\}$;\\
 ${\Omega}^{\psi+1} \leftarrow {\Omega}^{\psi} \cup \{{\bar x}^{\psi}\}$;\\
 $\psi \leftarrow \psi + 1$;\\
 Go to Step \rom{2};}

    \label{Benders01}
\end{algorithm}
\decmargin{1em}

After the initialization step, at each iteration ${\psi}$, the corresponding relaxed problem $\mathcal{F}^{\psi}$
is solved (Step \rom{2}, Algorithm \ref{Benders01}), obtaining a solution $({\bar y}^{\psi},{\bar \rho}^{\psi})$.
Then, the algorithm checks if $({\bar y}^{\psi},{\bar \rho}^{\psi})$ violates any constraint (\ref{milp01}) of the original problem $\mathcal{F}$.
For this purpose, it is solved a \emph{slave problem} that computes $R_{{\bar y}^{\psi}}$ (the actual robustness cost of ${\bar y}^{\psi}$) by finding an optimal solution
${\bar x}^{\psi} = opt(s({{\bar y}^{\psi})})$ for the scenario $s({{\bar y}^{\psi})}$ induced by ${\bar y}^{\psi}$
(see Step \rom{3}, Algorithm \ref{Benders01}). Notice that each slave problem involves solving the classical optimization problem $\mathcal{G}$,
given by (\ref{bilp00})-(\ref{bilp02}), in the scenario $s({{\bar y}^{\psi}})$.

Let $lb^{\psi} = {u}{{\bar y}^{\psi}} - {\bar \rho}^{\psi}$ be the value of the objective function in (\ref{milp00}) related
to the solution $({\bar y}^{\psi},{\bar \rho}^{\psi})$ of the current master problem $\mathcal{F}^{\psi}$.
According to \cite{Assuncao16b}, $lb^{\psi}$ and $R_{{\bar y}^{\psi}}$ give, respectively, a lower (dual) and an upper (primal) bounds on the solution of $\mathcal{F}$.
If $lb^{\psi}$ does not reach $R_{{\bar y}^{\psi}}$, $ub^{\psi}$ and $ub^{\psi+1}$ are both set to the best upper bound found by the algorithm until
the iteration $\psi$. In addition, a new constraint (\ref{milp04}) is generated from ${\bar x}^{\psi}$ and added to $\mathcal{F}^{\psi+1}$ by setting
${\Omega}^{\psi+1} \leftarrow {\Omega}^{\psi} \cup \{{\bar x}^{\psi}\}$. Otherwise, if $lb^{\psi}$ equals $R_{{\bar y}^{\psi}}$, the algorithm stops
(see Step \rom{4} of Algorithm \ref{Benders01}). As proved in \cite{Assuncao16b}, this stopping condition is
always satisfied in a finite number of iterations.

The algorithm detailed above is applied to obtain optimal solutions for the interval robust-hard problems considered in this work,
and their corresponding bounds are used to evaluate the quality of the solutions produced by the heuristic proposed in the next section.
We highlight that other exact algorithms can be adapted and used for this purpose, such as the \emph{branch-and-cut} algorithm proposed by Pereira and Averbakh \cite{Averbakh13}
for the interval data min-max regret set covering problem. In this work, we chose the logic-based Benders' algorithm, as it has already been successfully applied to solve a wide range of
interval min-max regret problems \cite{Mo06,MontemanniBarta07,Pereira11,MonteGa04c,Averbakh13}.
\section{An LP based heuristic framework for interval min-max regret problems}
\label{s_heuristic}
In this section, we present an LP based heuristic framework for interval robust-hard problems which is applicable to interval min-max regret problems in general.
Consider an interval min-max regret problem $\tilde{\mathcal{R}}$, as defined by (\ref{r00}) and (\ref{r01}) in Section~\ref{s_general_robust_pb}. 
For a given $\bar{y} \in \Omega$, the inner minimization in (\ref{r00}) is a 0-1 ILP problem, namely $\mathcal{G}({s({\bar y}}))$.
Particularly, it corresponds to problem $\mathcal{G}$, defined by (\ref{bilp00})-(\ref{bilp02}), under the cost vector $c^{s({\bar y})}$,
where $s({\bar y})$ is the scenario induced by $\bar{y}$.

Relaxing the integrality on the $x$ variables of $\mathcal{G}({s({\bar y})})$, we obtain the LP problem
\begin{eqnarray}
  \mbox{$\vartheta{({\bar y})}$} = \min && c^{s({\bar y})}x \label{g000}\\
  s.t.  && Ax \geq b, \label{g01}\\
	&& Ix \leq \textbf{1}, \label{g02}\\
	&& x \geq \textbf{0}, \label{g03}
\end{eqnarray}

\noindent whose corresponding dual problem is given by 
\begin{eqnarray}
	\mbox{$({\mathcal{D}}({\bar y}))\quad\vartheta{({\bar y})}$} = \max && ({b^T\lambda} + \textbf{1}^T\mu) \label{gdual00}\\
  s.t.  && A^{T}\lambda + I^{T}\mu \leq ({c^{s({\bar y})}})^T, \label{g04}\\
	&& \lambda \geq \textbf{0}, \label{g05}\\
	&& \mu \leq \textbf{0}. \label{g06}
\end{eqnarray}

\noindent Here, $I$ is the identidy matrix, and the dual variables $\lambda$ and $\mu$ are associated, respectively, with constraints (\ref{g01}) and (\ref{g02}) of the primal problem.
Notice that, in both LP problems, $\vartheta({\bar y})$ corresponds to an optimal solution cost and gives a lower (dual) bound on the optimal value
of $\mathcal{G}(s({\bar y}))$. Precisely,
\begin{equation}
 \vartheta({\bar y}) \leq \overbrace{\min\limits_{x \in \Omega}{c^{s({\bar y})}x}}^{\mathcal{G}(s({\bar y}))}\quad \forall\, \bar{y} \in \Omega. \label{g07}
\end{equation}

Replacing $\mathcal{G}(s({y}))$ by ${\mathcal{D}}({y})$ in (\ref{r00}), we obtain
\begin{eqnarray}
  \mbox{}\min && \Big({ c^{s(y)}y -  \max\;({b^T\lambda} + \textbf{1}^T\mu)}\Big)\\
  s.t. && \mbox{Constraints (\ref{r01}), (\ref{g05}) and (\ref{g06})}, \nonumber \\ 
	&& A^{T}\lambda + I^{T}\mu \leq ({c^{s(y)}})^T \label{g12}.
\end{eqnarray}
\noindent Recall that $c^{s(y)} ={\Big( l_1 + (u_1 - l_1)y_1, \ldots, l_n + (u_n - l_n)y_n \Big)}$ is the cost vector referred to the scenario $s(y)$ induced
by the $y$ variables. Notice that the nested maximization operator can be omitted, giving the following formulation.
\begin{eqnarray}
  \mbox{$(\hat{\mathcal{R}})\quad$}\min && \Big({ c^{s(y)}y - {b^T\lambda} - \textbf{1}^T\mu}\Big) \label{g08}\\
  s.t. && \mbox{Constraints (\ref{r01}), (\ref{g05}), (\ref{g06}) and (\ref{g12})}. \nonumber
\end{eqnarray}

\begin{proposition}
\label{prop01}
 The cost value referred to an optimal solution for $\hat{\mathcal{R}}$ gives an upper bound on the optimal
 solution value of $\tilde{\mathcal{R}}$, and this bound is tight (optimal) if (i) the restriction matrix $A$ is totally unimodular, and (ii) the column vector $b$ is integral.
\end{proposition}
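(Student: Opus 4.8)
The plan is to perform the inner optimization over the dual variables inside $\hat{\mathcal{R}}$, obtaining a clean single-level expression, and then to compare it term-by-term with $\tilde{\mathcal{R}}$ through the weak-duality inequality \eqref{g07}. First I would fix an arbitrary $\bar y \in \Omega$ and note that the relaxed primal LP \eqref{g000}--\eqref{g03} is feasible (any point of $\Omega$ satisfies $Ax \geq b$ and $\mathbf{0} \leq x \leq \mathbf{1}$) and bounded below by $0$, since $c^{s(\bar y)} \geq \mathbf{0}$ because $l_i,u_i \in \mathbb{Z}_+$. Hence LP strong duality applies to the pair \eqref{g000}--\eqref{g03} / $\mathcal{D}(\bar y)$: both optima are attained and equal $\vartheta(\bar y)$; in particular $\mathcal{D}(\bar y)$ is feasible for every $\bar y \in \Omega$. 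Consequently, minimizing \eqref{g08} over the feasible $(\lambda,\mu)$ with $y=\bar y$ held fixed yields exactly $c^{s(\bar y)}\bar y - \vartheta(\bar y)$, so the optimal value of $\hat{\mathcal{R}}$ equals $\min_{y\in\Omega}\bigl(c^{s(y)}y - \vartheta(y)\bigr)$, the minimum being attained since $\Omega$ is finite and each $\vartheta(y)$ is finite and attained.

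For the upper-bound claim I would then invoke \eqref{g07}: for every $y\in\Omega$ we have $\vartheta(y) \leq \min_{x\in\Omega} c^{s(y)}x$, hence $c^{s(y)}y - \vartheta(y) \geq c^{s(y)}y - \min_{x\in\Omega} c^{s(y)}x$. Taking the minimum over $y\in\Omega$ on both sides and recognizing the right-hand side as the objective of $\tilde{\mathcal{R}}$ in \eqref{r00}--\eqref{r01}, we conclude that the optimal value of $\hat{\mathcal{R}}$ is at least the optimal value of $\tilde{\mathcal{R}}$; that is, it is a valid upper bound.

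For tightness under hypotheses (i)--(ii), I would show that \eqref{g07} actually holds with equality. Since $A$ is totally unimodular, so is the matrix obtained by appending the rows $\pm I$; combined with the integrality of $b$ and of the bound vector $\mathbf{1}$, this makes the polytope $P = \{x : Ax \geq b,\ \mathbf{0} \leq x \leq \mathbf{1}\}$ integral, i.e.\ all its vertices lie in $\{0,1\}^n$, and moreover $P \cap \{0,1\}^n = \Omega$ because the box constraints are redundant for $0$--$1$ vectors. Since $c^{s(y)} \geq \mathbf{0}$ and $P$ is a nonempty polytope (it contains $\Omega$), the LP $\vartheta(y) = \min_{x\in P} c^{s(y)}x$ attains its optimum at a vertex of $P$, which lies in $\Omega$; hence $\vartheta(y) = \min_{x\in\Omega} c^{s(y)}x$ for every $y\in\Omega$. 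Substituting this into the chain of the previous paragraph collapses the inequality, so the optimal values of $\hat{\mathcal{R}}$ and $\tilde{\mathcal{R}}$ coincide.

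The step I expect to require the most care is the reduction in the first paragraph: justifying that strong duality is legitimately invoked (feasibility and boundedness of the relaxed primal for \emph{every} $y\in\Omega$) so that the nested $\max$ in \eqref{g08} may be replaced by $\vartheta(y)$, and, for the second part, correctly citing the total-unimodularity fact that appending $\pm I$ preserves total unimodularity and thereby yields an integral polyhedron $P$ whose lattice points are precisely $\Omega$.
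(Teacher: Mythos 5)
Your proof is correct and follows essentially the same route as the paper's: LP strong duality to identify the dual term with $\vartheta(y)$ (so that $\hat{\mathcal{R}}$ effectively minimizes $c^{s(y)}y-\vartheta(y)$ over $\Omega$), the relaxation bound \eqref{g07} for the upper-bound claim, and total unimodularity plus integrality of $b$ to make \eqref{g07} tight. The differences are only presentational: you compare the two objectives pointwise over $\Omega$, whereas the paper chains the same inequalities at particular optimal solutions $(\tilde y,\tilde x)$ and $(\hat y,\hat\lambda,\hat\mu)$, and you spell out the strong-duality attainment and the Hoffman--Kruskal integrality argument that the paper leaves implicit.
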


\begin{proof}
 Let $({\tilde y},{\tilde x})$ and $(\hat{y},\hat{\lambda},\hat{\mu})$ be optimal solutions for ${\tilde{\mathcal{R}}}$
 and $\hat{\mathcal{R}}$, respectively.
 The value of the objective function in (\ref{r00}) referred to $({\tilde y},{\tilde x})$ is given by
 $(c^{{s({{\tilde y}})}}{{\tilde y}} - c^{{s({\tilde y})}}\tilde{x})$, whereas that of (\ref{g08}) referred to
 $(\hat{y},\hat{\lambda},\hat{\mu})$ is 
 $(c^{s({\hat y})}{\hat y} - ({b^T\hat{\lambda}} + \textbf{1}^T\hat{\mu}))$. Since $({\tilde y},{\tilde x})$ is optimal for
 $\tilde{\mathcal{R}}$, we have that $(c^{s({{\tilde y}})}{\tilde y} - c^{s({{\tilde y}})}\tilde{x}) \leq (c^{s(y)}{y} - \min\limits_{x \in \Omega}{c^{s(y)}x})$ for all $y \in \Omega$.
In particular, as $\hat{y} \in \Omega$, it holds that 
\begin{equation}
 c^{s({{\tilde y}})}{\tilde y} - c^{s({{\tilde y}})}\tilde{x} \leq c^{s({\hat y})}{\hat{y}} - \min\limits_{x \in \Omega}{c^{s({\hat y})}x} \label{gi}.
\end{equation}
%

As $(\hat{y},\hat{\lambda},\hat{\mu})$ is optimal for $\hat{\mathcal{R}}$, it follows, from (\ref{g000})-(\ref{g07}), that 
\begin{equation}
 c^{s({\hat y})}{\hat{y}} - \min\limits_{x \in \Omega}{c^{s({\hat y})}x} \leq
 c^{s({\hat y})}{\hat{y}} - \vartheta({\hat y}) =
 c^{s({\hat y})}{\hat y} - ({b^T\hat{\lambda}} + \textbf{1}^T\hat{\mu}). \label{g09}
\end{equation}

Then, from (\ref{gi}) and (\ref{g09}), we obtain that
\begin{equation}
c^{s({{\tilde y}})}{\tilde y} - c^{s({{\tilde y}})}\tilde{x} \leq c^{s({\hat y})}{\hat y} - ({b^T\hat{\lambda}} + \textbf{1}^T\hat{\mu}).\label{gi2}
\end{equation}

Now, also suppose that (i) the matrix $A$ is totally unimodular, and (ii) the column vector $b$ is integral.
As $(\hat{y},\hat{\lambda},\hat{\mu})$ is optimal for $\hat{\mathcal{R}}$, it follows, from (\ref{g000})-(\ref{g06}), that 
\begin{equation}
c^{s({\hat y})}{\hat y} - ({b^T\hat{\lambda}} + \textbf{1}^T\hat{\mu}) = 
 c^{s({\hat y})}{\hat{y}} - \vartheta({{\hat y}}) \leq
 c^{s({y})}{y} - \vartheta({y}) \quad \forall y \in \Omega.
\end{equation}

\noindent Particularly, as $\tilde{y} \in \Omega$,
\begin{equation}
 c^{s({\hat y})}{\hat y} - ({b^T\hat{\lambda}} + \textbf{1}^T\hat{\mu}) \leq
 c^{s({\tilde y})}{\tilde y} - \vartheta({\tilde y}). \label{g10}
\end{equation}

Additionally, from assumption (i), it follows that the restriction matrix $(A,I)^T$ referred to (\ref{gdual00})-(\ref{g06}) is also totally unimodular.
Thus, from assumption (ii), inequation (\ref{g07}) is tight, i.e.,
\begin{equation}
\vartheta({\bar y}) = \min\limits_{x \in \Omega}{c^{s({\bar y})}x}\quad \forall {\bar y}\in \Omega. \label{g11}
\end{equation}

\noindent As $({\tilde y},{\tilde x})$ is optimal for $\tilde{\mathcal{R}}$, we obtain, from (\ref{g10}) and (\ref{g11}),
\begin{equation}
 c^{s({\hat y})}{\hat y} - ({b^T\hat{\lambda}} + \textbf{1}^T\hat{\mu})\leq
 c^{s({\tilde y})}{\tilde y} - \vartheta({\tilde y}) = 
 c^{s({\tilde y})}{\tilde y} - \min\limits_{x \in \Omega}{c^{s({\tilde y})}x} = 
 c^{s({\tilde y})}{\tilde y} - {c^{s({\tilde y})}{\tilde x}},
\end{equation}

\noindent which, along with (\ref{gi2}), implies $(c^{s({\hat y})}{\hat y} - ({b^T\hat{\lambda}} + \textbf{1}^T\hat{\mu})) = 
(c^{s({\tilde y})}{\tilde y} - {c^{s({\tilde y})}{\tilde x}})$.
\end{proof}

For a given problem $\mathcal{R}$, defined by (\ref{g00}) and (\ref{d00}), the heuristic framework consists in (\rom{1}) solving the corresponding formulation $\hat{\mathcal{R}}$,
obtaining a solution $(\hat{y},\hat{\lambda},\hat{\mu})$, and (\rom{2}) computing the robustness cost (maximum regret) referred to $\hat{y}$,
considering Theorem~\ref{teo01}.
One may note that the heuristic is applicable not only to interval robust-hard problems, but to any interval min-max regret problem of the
general form of $\mathcal{R}$. From Proposition~\ref{prop01}, whenever the classical optimization problem counterpart can be modeled as
a 0-1 ILP of the form of $\mathcal{G}$, with a totally unimodular restriction matrix and $b$ integral, there is a guarantee of optimality at solving
$\hat{\mathcal{R}}$.
In fact, applying the framework detailed above to the interval data min-max regret versions of the polynomially solvable shortest path and
assignment problems presented,
respectively, in \cite{Karasan01} and \cite{Kasperski05}, leads to the same compact MILP formulations proposed and computationally tested in these works.

Notice that, whenever $\mathcal{G}$ is compact, the resulting formulation $\hat{\mathcal{R}}$ is also compact.
We also highlight that, although the heuristic framework was detailed by the assumption of $\mathcal{G}$ being a minimization problem,
the results also hold for maximization problems, with minor modifications.
In addition, we conjecture that the heuristic framework also provides valid bounds for the wider class of interval data min-max regret problems with compact constraint sets
addressed in \cite{Conde2010,Conde2012}. We believe that Theorem~\ref{teo01} can be extended to this class of problems through the definition of
a specific scenario similar to the one induced by a solution. However, a more careful study needs to be conducted to close this issue.

In this work, we do not present any theoretical guarantee of quality for the solutions obtained by the heuristic, as it
may rely on specific problem dependent factors, such as the strength of the formulation adopted to model the classical counterpart
problem $\mathcal{G}$. Nevertheless, in the next section, we present two very successful applications of the heuristic in solving
interval robust-hard problems.
\section{Case studies in solving interval robust-hard problems}
\label{s_case_studies}
In this section, we define the interval robust-hard problems used as case studies for the proposed heuristic framework. For each problem, we give a
mathematical formulation according to the modeling technique presented in Section~\ref{s_general_robust_pb} and devise the corresponding MILP
formulation tackled by the heuristic. For the interval data min-max regret restricted shortest path problem, which is introduced in this work,
we give a more detailed description, along with a brief motivation and some related works. For simplicity, let \emph{robust} stand for \emph{interval data min-max regret} in the designation of each problem. 
\subsection{The Restricted Robust Shortest Path problem (R-RSP)}
\label{s_description}
The Restricted Robust Shortest Path problem (R-RSP) is an interval data min-max regret
version of the Restricted Shortest Path problem (R-SP),
an extensively studied NP-hard problem \cite{Garey79, Handler80,
  Aneja83, Beasley89, Hassin92, Wang96}. Consider a digraph $G=(V,A)$,
where $V$ is the set of vertices, and $A$ is the set of arcs. With each
arc $(i,j) \in A$, we associate a resource consumption
$d_{ij} \in \mathbb{Z}_{+}$ and a continuous cost interval
[${l}_{ij}$,${u}_{ij}$], where ${l}_{ij} \in \mathbb{Z}_{+}$ is the
lower bound, and ${u}_{ij}\in \mathbb{Z_{+}}$ is the upper bound on
this interval of cost, with ${l}_{ij} \leq {u}_{ij}$.  An origin
vertex $o \in V$ and a destination one $t \in V$ are also given,
as well as a value $\beta \in \mathbb{Z}$, parameter used to
limit the resource consumed along a path from $o$ to $t$ in $G$, as discussed in the sequel.
An example of an R-RSP instance is given in
Figure \ref{fig_example}.
\begin{figure}[!h]
\center
\includegraphics*[scale=0.8]{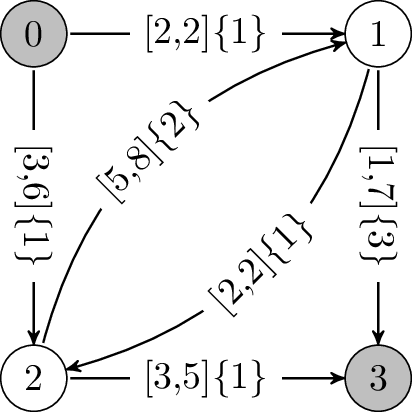}
\caption{Example of an R-RSP instance, with origin $o = 0$ and
  destination $t=3$. The notation $[{l}_{ij},{u}_{ij}]\{d_{ij}\}$
  means, respectively, the cost interval $[{l}_{ij},{u}_{ij}]$ and the
  resource consumption $\{d_{ij}\}$ associated with each arc $(i,j)
  \in A$.}
\label{fig_example}
\end{figure}

Here, a scenario $s$ is an assignment of arc costs, where a cost
$c_{ij}^s \in [{l}_{ij},{u}_{ij}]$ is fixed for all $(i,j) \in A$.
Let $\mathcal{P}$ be the set of all paths from $o$ to $t$ and $A[p]$
be the set of the arcs that compose a path $p \in \mathcal{P}$.
Also let $\mathcal{S}$ be the set of all possible scenarios of $G$. The
cost of a path $p \in \mathcal{P}$ in a scenario $s \in \mathcal{S}$
is given by $C_p^s = \sum\limits_{(i,j) \in A[p]}{c_{ij}^s}$.
Similarly, the resource consumption referred to a path $p \in
\mathcal{P}$ is given by $D_p = \sum\limits_{(i,j) \in
  A[p]}{d_{ij}}$. Also consider $\mathcal{P}(\beta) = \{p \in
\mathcal{P}$ $|$ $D_p \leq \beta\}$, the subset of paths in
$\mathcal{P}$ whose resource consumptions are smaller than or equal to
$\beta$.
\begin{definition}
  A path $p^*(s,\beta) \in \mathcal{P}(\beta)$ is said to be a
  $\beta$-\emph{restricted shortest path} in a scenario $s \in
  \mathcal{S}$ if it has the smallest cost in $s$ among all paths in
  $\mathcal{P}(\beta)$, i.e., $p^*(s,\beta) = \arg\min\limits_{p \in
    \mathcal{P}(\beta)}{C_{p}^s}$.
\end{definition}
\begin{definition}
  The \emph{regret (robust deviation)} of a path $p \in \mathcal{P}(\beta)$ in
  a scenario $s \in \mathcal{S}$, denoted by $r^{(s,\,\beta)}_p$, is the
  difference between the cost $C^s_p$ of $p$ in $s$ and the cost of a
  $\beta$-restricted shortest path $p^*(s,\beta) \in
  \mathcal{P}(\beta)$ in $s$, i.e., $r^{(s,\,\beta)}_p = C^s_p -
  C^s_{p^*(s,\,\beta)}$.
\end{definition}
\begin{definition}
  The $\beta$-\emph{restricted robustness cost} of a path $p \in
  \mathcal{P}(\beta)$, denoted by $R_p^{\beta}$, is the maximum regret of $p$ among all possible scenarios, i.e., $R_p^{\beta} =
  \max\limits_{s \in \mathcal{S}}{r^{(s,\,\beta)}_p} $.
\end{definition}
\begin{definition}
  A path $p^* \in \mathcal{P}(\beta)$ is said to be a
  $\beta$-\emph{restricted robust path} if it has the smallest
  $\beta$-restricted robustness cost among all paths in
  $\mathcal{P}(\beta)$, i.e., $p^* = \arg\min\limits_{p \in
    \mathcal{P}(\beta)}{R_{p}^{\beta}}$.
\end{definition}
\begin{definition}
  \emph{R-RSP} consists in finding a $\beta$-restricted robust path $p^* \in \mathcal{P}(\beta)$.
\end{definition}
R-RSP has applications in determining paths in urban areas, where
travel distances are known in advance, but travel times may vary
according to unpredictable traffic jams, bad weather conditions, etc.
Here, uncertainties are represented by values in continuous intervals,
which estimate the minimum and the maximum traveling times to traverse
each pathway. For instance, R-RSP can model situations involving
electrical vehicles with a limited battery (energy) autonomy, when one
wants to find the fastest robust path with a length (distance)
constraint. A similar application arises in telecommunications
networks, when one wants to determine a path to efficiently send a data
packet from an origin node to a destination one in a network. Due to the
varying traffic load, transmission links are subject to uncertain
delays. Moreover, each link is associated with a packet loss rate. In
order to guarantee Quality of Service (QoS)~\cite{Wang96,
Apostolopoulos98}, a limit is imposed on the total packet loss rates
of the path used.

The classical R-SP is a particular case of R-RSP in which $l_{ij} = u_{ij}$ $\forall\, (i,j) \in A$.
As R-SP is known to be NP-hard \cite{Garey79,Handler80}
even for acyclic graphs \cite{Wang96}, the same holds for R-RSP.
The main exact algorithms to solve R-SP can be subdivided into two groups:
\emph{Lagrangian relaxation} and \emph{dynamic programming} procedures.
The former procedures use Lagrangian relaxation to handle ILP formulations
for the problem (see, e.g., \cite{Handler80,Beasley89}).
In addition, preprocessing techniques have
been presented in \cite{Aneja83} and refined in
\cite{Beasley89}. These techniques identify arcs and vertices that
cannot compose an optimal solution for R-SP through the analysis of
the reduced costs related to the resolution of dual Lagrangian
relaxations. More recently, \cite{Santos07} proposed a path ranking
approach that linearly combines the arc costs and the resource
consumption values to generate a descent direction of search.
In turn, dynamic programming procedures for R-SP consist
of label-setting and label-correcting algorithms, such as the one
proposed in \cite{Joksch66} and further improved in
\cite{Dumitrescu03} by the addition of preprocessing
strategies. Recently, Zhu and Wilhelm~\cite{Zhu2012} developed a
three-stage label-setting algorithm that runs in pseudo-polynomial time
and figures among the state-of-the-art methods to solve R-SP, along with the
algorithms presented in \cite{Santos07} and \cite{Dumitrescu03}.  We
refer to \cite{Pugliese13} for a survey on exact methods to solve
R-SP.

Although NP-hard, R-SP can be solved efficiently by some of the aforementioned procedures
(particularly, the ones proposed in \cite{Santos07,Dumitrescu03,Zhu2012}). Moreover,
optimization softwares, as CPLEX\footnote{http://www-01.ibm.com/software/commerce/optimization/cplex-optimizer/}, are
also competitive in handling reasonably-sized instances (with up to 3000 vertices) of the problem \cite{Zhu2012}.

R-RSP is also a generalization of the interval data min-max regret Robust
Shortest Path problem (RSP) \cite{Kouvelis1997, Averbakh05, Rosenhead72, Pawel04}.
RSP consists in finding a robust path (from
the origin vertex to the destination one) considering the min-max
regret criterion, with no additional resource consumption restriction on the solution path.
Therefore, RSP can be reduced to R-RSP by considering $\beta=0$ and $d_{ij} = 0$ $\forall\, (i,j) \in A$. 

Preprocessing techniques able to identify
arcs that cannot compose an optimal solution for RSP have been proposed in
\cite{Karasan01} and later improved in \cite{Catanzaro11}.  A compact
MILP formulation for the problem, based on the dual of an LP formulation for
the classical shortest path problem, was presented in \cite{Karasan01}. Exact algorithms have been proposed for RSP,
such as the \emph{branch-and-bound} algorithm of \cite{MonteGa04b} and the
logic-based Benders' algorithm of \cite{MonteGa04c}, which is able to solve instances with up to 4000 vertices.
Moreover, the FPTAS of \cite{Kasperski07} for interval min-max regret problems is applicable to RSP when the problem is considered in series-parallel graphs.
However, as pointed out by the end of Section~\ref{s_rw}, this FPTAS does not naturally extend to interval
robust-hard problems, such as R-RSP.
\subsubsection{Mathematical formulation}
\label{s_model}
The mathematical formulation here presented makes use of Theorem~\ref{teo01}, which can be stated for the specific case of R-RSP as follows.

\begin{proposition}
\label{prop_rrsp_max}
Given a value $\beta \in \mathbb{Z}$ and a path $p \in \mathcal{P}(\beta)$, the regret of $p$ is maximum in the
scenario $s(p)$ induced by $p$, where the costs of all the arcs of $p$ are in their corresponding upper bounds and the costs of all the other arcs are in their
lower bounds.
 \end{proposition}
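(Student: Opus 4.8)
The plan is to recognize R-RSP as a particular instance of the generic interval data min-max regret problem $\mathcal{R}$ of Section~\ref{s_general_robust_pb} and then invoke Theorem~\ref{teo01}. First I would introduce, for each arc $(i,j) \in A$, a binary variable $y_{ij}$ and identify a path $p \in \mathcal{P}(\beta)$ with its arc-incidence vector, so that $\Omega$ becomes the set of incidence vectors of $o$-$t$ paths satisfying $\sum_{(i,j)\in A} d_{ij} y_{ij} \le \beta$ (flow-conservation constraints, the resource constraint, and integrality). Under this identification the cost of $p$ in a scenario $s$ equals $c^s y$, the cost interval $[l_{ij},u_{ij}]$ plays the role of $[l_i,u_i]$, and the set $\mathcal{P}(\beta)$ does not depend on $s$ since the resource consumption $D_p$ is scenario-independent. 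Hence R-RSP is an instance of $\mathcal{R}$, and Theorem~\ref{teo01} applied to $\bar y$, the incidence vector of $p$, yields exactly the claimed scenario $s(p)$: $c^{s(p)}_{ij} = u_{ij}$ when $(i,j) \in A[p]$ and $c^{s(p)}_{ij} = l_{ij}$ otherwise.

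For completeness I would also sketch the direct argument. Fix $p \in \mathcal{P}(\beta)$ and an arbitrary scenario $s$; the goal is $r_p^{(s,\beta)} \le r_p^{(s(p),\beta)}$. I would transform $s$ into $s(p)$ by changing one arc cost at a time: for $(i,j) \in A[p]$ raise $c_{ij}$ to $u_{ij}$, and for $(i,j) \notin A[p]$ lower $c_{ij}$ to $l_{ij}$. At each step I claim the regret $C_p - \min_{q\in\mathcal{P}(\beta)} C_q$ does not decrease: raising the cost of an arc used by $p$ by $\delta \ge 0$ increases $C_p$ by $\delta$ and increases $\min_{q\in\mathcal{P}(\beta)} C_q$ by at most $\delta$; lowering the cost of an arc not used by $p$ by $\delta \ge 0$ leaves $C_p$ unchanged and decreases $\min_{q\in\mathcal{P}(\beta)} C_q$ by at most $\delta$. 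Summing over the finitely many steps gives $r_p^{(s,\beta)} \le r_p^{(s(p),\beta)}$, and since $s$ is arbitrary, $s(p)$ maximizes the regret of $p$.

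The main obstacle is the monotonicity claim about the inner minimization: increasing a single arc cost by $\delta \ge 0$ changes $\min_{q\in\mathcal{P}(\beta)} C_q^s$ by a nonnegative amount bounded above by $\delta$. I would establish it by letting $q^\star$ be an optimizer before the change: its cost grows by at most $\delta$ (by exactly $\delta$ if it traverses the modified arc, by $0$ otherwise), so the new minimum is at most $C_{q^\star} + \delta$, which equals the old minimum plus $\delta$; and the new minimum is at least the old one because every path cost weakly increased. The symmetric statement for a cost decrease follows in the same way. Since $\mathcal{P}(\beta)$ is nonempty (it contains $p$) and, the arc costs being nonnegative, the minimum is attained by a simple path, these steps are valid, which completes the argument.
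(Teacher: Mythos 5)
Your proposal is correct and follows essentially the same route as the paper: the paper gives no separate proof of Proposition~\ref{prop_rrsp_max}, presenting it precisely as the specialization of Theorem~\ref{teo01} (Aissi et al.) obtained by encoding paths in $\mathcal{P}(\beta)$ as arc-incidence vectors of a 0-1 ILP, which is exactly your first paragraph. Your additional one-arc-at-a-time exchange argument is a correct, self-contained proof (it is in essence the standard proof of Theorem~\ref{teo01} itself), so it adds rigor the paper delegates to the citation but does not change the approach.
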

\begin{figure}[!h]
\center
\includegraphics*[scale=0.8]{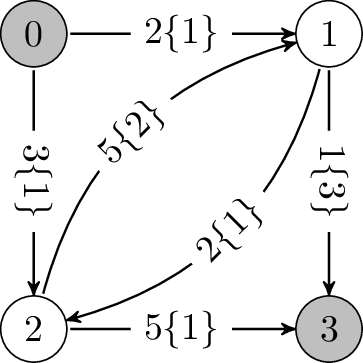}
\caption{Scenario $s_{\tilde{p}}$ induced by the path $\tilde{p}=\{0,1,2,3\}$ in the digraph presented in Figure \ref{fig_example}.
For each arc $(i,j) \in A$,
the notation ${c}_{ij}^{s({\tilde{p}})}\{d_{ij}\}$ means, respectively, the arc cost ${c}_{ij}^{s({\tilde{p}})}$
in the scenario $s({\tilde{p}})$ and its resource consumption $\{d_{ij}\}$.}
\label{fig_result}
\end{figure}

Consider the digraph presented in Figure~\ref{fig_example} and the scenario $s({\tilde{p}})$ induced by the path $\tilde{p} = \{0,1,2,3\}$ (as showed in Figure~\ref{fig_result}). Also let the resource limit be $\beta = 3$.
According to Proposition \ref{prop_rrsp_max}, the $\beta$-restricted robustness cost of $\tilde{p}$ is given by
$R_{\tilde{p}}^\beta = r^{(s({\tilde{p}}),\,\beta)}_{\tilde{p}} = C^{s({\tilde{p}})}_{\tilde{p}} - C^{s({\tilde{p}})}_{{p}^*(s({\tilde{p}}),\,\beta)} = (2+2+5) - (3+5)=1$.
Proposition \ref{prop_rrsp_max} reduces R-RSP to finding a path $p^* \in
\mathcal{P}(\beta)$ such that $p^* = \arg\min\limits_{p \in
  \mathcal{P}}{r^{(s({p}),\,\beta)}_{p}}$, i.e., $p^* =
\arg\min\limits_{p \in \mathcal{P}(\beta)}{\{C^{s({p})}_{p} -
  C^{s({p})}_{{p}^*(s({p}),\,\beta)}\}}$.
Nevertheless, computing the $\beta$-restricted robustness cost of any feasible
solution $p$ for R-RSP still implies finding a $\beta$-restricted shortest
path $p^*(s(p),\beta)$ in the scenario $s({p})$ induced by $p$, and this problem is NP-hard.

Now, let us consider the following result, which helps describing an optimal solution path for R-RSP.
\begin{property}
\label{property01}
Given two arbitrary sets $Z_1$ and $Z_2$, it holds that $Z_1 = (Z_1 \cap Z_2) \cup (Z_1 \backslash Z_2)$.
\end{property}
 
\begin{theorem}
\label{teo02}
 Given a value $\beta \in \mathbb{Z}$ and a non-elementary path $p \in \mathcal{P}(\beta)$,
 for any elementary path $\tilde{p} \in \mathcal{P}(\beta)$ such that $A[\tilde{p}] \subset A[p]$, it holds that
 $r_{\tilde{p}}^{(s({\tilde{p}}),\beta)} \leq r_{p}^{(s(p),\beta)}$.
 \end{theorem}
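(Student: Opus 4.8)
The plan is to prove the statement in its equivalent rearranged form. Write $q := p^*(s(p),\beta)$ and $\tilde{q} := p^*(s(\tilde{p}),\beta)$ for the $\beta$-restricted shortest paths in the two induced scenarios, so that, by the definition of regret, the inequality $r_{\tilde{p}}^{(s(\tilde{p}),\beta)} \le r_{p}^{(s(p),\beta)}$ is the same as
\[
 C^{s(\tilde{p})}_{\tilde{p}} - C^{s(p)}_{p} \;\le\; C^{s(\tilde{p})}_{\tilde{q}} - C^{s(p)}_{q}.
\]
First I would record how the scenarios $s(p)$ and $s(\tilde{p})$ relate. Since $A[\tilde{p}] \subset A[p]$, Proposition~\ref{prop_rrsp_max} gives that an arc $(i,j)$ carries the same cost in both scenarios unless $(i,j) \in N := A[p] \setminus A[\tilde{p}]$, in which case $c^{s(p)}_{ij} = u_{ij}$ while $c^{s(\tilde{p})}_{ij} = l_{ij}$; hence $c^{s(p)}_{ij} - c^{s(\tilde{p})}_{ij}$ equals $u_{ij} - l_{ij} \ge 0$ on $N$ and $0$ elsewhere. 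Moreover, by Property~\ref{property01} applied to $A[p]$ and $A[\tilde{p}]$, together with $A[\tilde{p}] \subseteq A[p]$, one gets the decomposition $A[p] = A[\tilde{p}] \cup N$ with $A[\tilde{p}] \cap N = \emptyset$.

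Then I would establish two estimates. For the left-hand side: every arc of $\tilde{p}$ (resp.\ of $p$) is at its upper bound in $s(\tilde{p})$ (resp.\ $s(p)$) by Proposition~\ref{prop_rrsp_max}, so the disjoint decomposition above yields $C^{s(p)}_{p} - C^{s(\tilde{p})}_{\tilde{p}} = \sum_{(i,j) \in N} u_{ij} =: U \ge 0$. For the right-hand side I would use the optimality of $q$ in scenario $s(p)$: since $\tilde{q} \in \mathcal{P}(\beta)$ is a feasible competitor of $q$, we have $C^{s(p)}_{q} \le C^{s(p)}_{\tilde{q}}$, and therefore
\[
 C^{s(p)}_{q} - C^{s(\tilde{p})}_{\tilde{q}} \;\le\; C^{s(p)}_{\tilde{q}} - C^{s(\tilde{p})}_{\tilde{q}} \;=\; \sum_{(i,j) \in A[\tilde{q}] \cap N} (u_{ij} - l_{ij}) \;\le\; \sum_{(i,j) \in N} u_{ij} \;=\; U,
\]
where the last two inequalities use $A[\tilde{q}] \cap N \subseteq N$ and $0 \le l_{ij} \le u_{ij}$. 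Combining the two estimates gives $C^{s(\tilde{p})}_{\tilde{p}} - C^{s(p)}_{p} = -U \le C^{s(\tilde{p})}_{\tilde{q}} - C^{s(p)}_{q}$, which is precisely the rearranged target inequality.

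The conceptual heart of the argument — and the point where care is needed — is that a single arc set, $N = A[p] \setminus A[\tilde{p}]$, simultaneously governs both sides: passing from $p$ to the strictly smaller path $\tilde{p}$ saves exactly $\sum_{N} u_{ij}$ in the ``own-cost'' term, and this savings dominates the at-most-$\sum_{N}(u_{ij} - l_{ij})$ discrepancy that the scenario change can create in the restricted-shortest-path term. The only technical subtlety is how to handle arc multiplicities when $p$ (or an optimal restricted path) is non-elementary; this is benign because the paper's cost and resource functions are defined as sums over the arc \emph{set}, and because nonnegativity of costs and resource consumptions allows $q$ and $\tilde{q}$ to be taken elementary without altering their costs. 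Incidentally, elementarity of $\tilde{p}$ itself is never used: the whole argument needs only $A[\tilde{p}] \subseteq A[p]$.
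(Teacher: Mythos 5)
Your proof is correct and follows essentially the same route as the paper's: you work with the same arc set $N=A[p]\setminus A[\tilde p]$ (the paper's $\bar A$), the same observation that the induced scenarios differ only on those arcs, the same bound $u_{ij}-l_{ij}\le u_{ij}$ coming from $l_{ij}\ge 0$, and the same final appeal to the optimality of $p^*(s(p),\beta)$ in the scenario $s(p)$. The only difference is presentational (you rearrange the target inequality and bound each side separately, which streamlines the paper's chain of equations), and your closing remarks about elementarity being inessential are accurate.
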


\begin{proof}
Consider a value $\beta \in \mathbb{Z}$ and a non-elementary path $p \in \mathcal{P}(\beta)$. By definition, $p$ contains at least one cycle.
Let $G[p]$ be the subgraph of $G$ induced by the arcs in $A[p]$ and $\tilde{p}$ be an elementary path from $o$ to $t$ in $G[p]$.
Clearly, $A[\tilde{p}] \subset A[p]$ and, therefore, $\tilde{p} \in \mathcal{P}(\beta)$. By definition,
{
\begin{eqnarray}
  r_{\tilde{p}}^{(s({\tilde{p}}),\beta)} = C_{\tilde{p}}^{s({\tilde{p}})} - C_{{p}^*(s({\tilde{p}}),\beta)}^{s({\tilde{p}})}.\label{b0}
\end{eqnarray}}

Consider the set $ {\bar A} = A[p] \backslash A[{\tilde{p}}]$ of the arcs in $p$ which do not belong to $\tilde{p}$.
As $p$ is supposed to contain at least one cycle, then ${\bar A} \neq \emptyset $. Since $A[{\tilde{p}}] \subset A[p]$, the difference
between scenarios $s(p)$ and $s({\tilde{p}})$ consists of the cost values assumed by the arcs in ${\bar A}$. More precisely,
{
\begin{eqnarray}
c_{ij}^{s({\tilde{p}})} = c_{ij}^{s(p)} \hspace{1.5cm} \forall (i,j) \in A \backslash {\bar A},\label{b1.0} \\
c_{ij}^{s(p)} = {u}_{ij} \hspace{2cm} \forall (i,j) \in {\bar A},\label{b1} \\
c_{ij}^{s({\tilde{p}})} = {l}_{ij} \hspace{2.1cm} \forall (i,j) \in {\bar A}.\label{b2}
\end{eqnarray}}

It follows that
{
\begin{eqnarray}
C_{\tilde{p}}^{s({\tilde{p}})} = \sum\limits_{(i,j) \in A[\tilde{p}]} {u}_{ij} = \sum\limits_{(i,j) \in A[p]} {u}_{ij} 
- \sum\limits_{(i,j) \in A[p] \backslash A[{\tilde{p}}]} {u}_{ij} = C_{p}^{s({p})} - \sum\limits_{(i,j) \in {\bar A}} {u}_{ij}. \label{b3}
\end{eqnarray}}

Applying Property~\ref{property01} to the sets $A[{{{p}^*(s({\tilde{p}}),\beta)}}]$ and $A[p]$:
{
\begin{eqnarray}
A[{{{p}^*(s({\tilde{p}}),\beta)}}] = \overbrace{(A[{{{p}^*(s({\tilde{p}}),\beta)}}] \cap A[p])}^{(a_1)} \cup (A[{{{p}^*(s({\tilde{p}}),\beta)}}] \backslash A[p]).\label{b4}
\end{eqnarray}}

Since $A[{\tilde{p}}] \subset A[p]$ and ${\bar A} = A[p] \backslash A[{\tilde{p}}]$, it follows that
{
\begin{eqnarray}
A[p] = A[\tilde{p}] \cup (A[p] \backslash A[{\tilde{p}}]) = A[\tilde{p}] \cup {\bar A}.
\end{eqnarray}}

Therefore, expression $(a_1)$ of ($\ref{b4}$) can be rewritten as

{
\begin{equation*}
    (A[{{{p}^*(s({\tilde{p}}),\beta)}}] \cap A[p]) = (A[{{{p}^*(s({\tilde{p}}),\beta)}}] \cap (A[\tilde{p}] \cup {\bar A})) =
\end{equation*}}
{\small
\begin{equation}
    (A[{{{p}^*(s({\tilde{p}}),\beta)}}] \cap A[{\tilde{p}}]) \cup (A[{{{p}^*(s({\tilde{p}}),\beta)}}] \cap {\bar A}). \label{b5}
\end{equation}}

Applying (\ref{b4}) and (\ref{b5}) to $C_{{p}^*(s({\tilde{p}}),\beta)}^{s({\tilde{p}})}$ and $C_{{p}^*(s({\tilde{p}}),\beta)}^{s_p}$, we obtain:
{
\begin{equation*}
C_{{p}^*(s({\tilde{p}}),\beta)}^{s({\tilde{p}})} = \sum\limits_{(i,j) \in A[{{{p}^*(s({\tilde{p}}),\beta)}}] \cap A[{\tilde{p}}]}c_{ij}^{s({\tilde{p}})} + 
\sum\limits_{(i,j) \in A[{{{p}^*(s({\tilde{p}}),\beta)}}] \cap {\bar A}}c_{ij}^{s({\tilde{p}})}
\end{equation*}
\begin{equation}
+ \sum\limits_{(i,j) \in A[{{{p}^*(s({\tilde{p}}),\beta)}}] \backslash A[p]}c_{ij}^{s({\tilde{p}})}, \label{b6} \\
\end{equation}

\begin{equation*}
C_{{p}^*(s({\tilde{p}}),\beta)}^{s(p)} = \sum\limits_{(i,j) \in A[{{{p}^*(s({\tilde{p}}),\beta)}}] \cap A[{\tilde{p}}]}c_{ij}^{s(p)} + 
\sum\limits_{(i,j) \in A[{{{p}^*(s({\tilde{p}}),\beta)}}] \cap {\bar A}}c_{ij}^{s(p)}
\end{equation*}

\begin{equation}
+ \sum\limits_{(i,j) \in A[{{{p}^*(s({\tilde{p}}),\beta)}}] \backslash A[p]}c_{ij}^{s(p)}. \label{b7}
\end{equation}}

Considering ($\ref{b1.0}$)-($\ref{b2}$), ($\ref{b6}$) and ($\ref{b7}$), we deduce that the difference between the cost of path ${p}^*(s({\tilde{p}}),\beta)$ in $s(p)$ and its cost
in $s({\tilde{p}})$ is given by the arcs which are simultaneously in ${\bar A}$ and in $A[{p}^*(s({\tilde{p}}),\beta)]$.
Thus, expressions (\ref{b6}) and (\ref{b7}) can be reformulated as
{
\begin{equation*}
C_{{p}^*(s({\tilde{p}}),\beta)}^{s({\tilde{p}})} = \sum\limits_{(i,j) \in A[{{{p}^*(s({\tilde{p}}),\beta)}}] \cap A[{\tilde{p}}]}{u}_{ij} + 
\sum\limits_{(i,j) \in A[{{{p}^*(s({\tilde{p}}),\beta)}}] \cap {\bar A}}{l}_{ij}
\end{equation*}
\begin{equation}
+ \sum\limits_{(i,j) \in A[{{{p}^*(s({\tilde{p}}),\beta)}}] \backslash A[p]}{l}_{ij}, \label{b8}
\end{equation}
\begin{equation*}
C_{{p}^*(s({\tilde{p}}),\beta)}^{s(p)} = \sum\limits_{(i,j) \in A[{{{p}^*(s({\tilde{p}}),\beta)}}] \cap A[{\tilde{p}}]}{u}_{ij} + 
\sum\limits_{(i,j) \in A[{{{p}^*(s({\tilde{p}}),\beta)}}] \cap {\bar A}}{u}_{ij}
\end{equation*}
\begin{equation}
+ \sum\limits_{(i,j) \in A[{{{p}^*(s({\tilde{p}}),\beta)}}] \backslash A[p]}{l}_{ij}. \label{b9}
\end{equation}}

Subtracting (\ref{b9}) from (\ref{b8}),
{
\begin{eqnarray}
C_{{p}^*(s({\tilde{p}}),\beta)}^{s({\tilde{p}})} - C_{{p}^*(s({\tilde{p}}),\beta)}^{s(p)} =
\sum\limits_{(i,j) \in A[{{{p}^*(s({\tilde{p}}),\beta)}}] \cap {\bar A}}{l}_{ij} -
\sum\limits_{(i,j) \in A[{{{p}^*(s({\tilde{p}}),\beta)}}] \cap {\bar A}}{u}_{ij}. \label{b10}
\end{eqnarray}}

Therefore,
{
\begin{eqnarray}
C_{{p}^*(s({\tilde{p}}),\beta)}^{s({\tilde{p}})} = C_{{p}^*(s({\tilde{p}}),\beta)}^{s(p)} -
\sum\limits_{(i,j) \in A[{{{p}^*(s({\tilde{p}}),\beta)}}] \cap {\bar A}}({u}_{ij} - {l}_{ij}). \label{b11}
\end{eqnarray}}

Applying (\ref{b3}) and (\ref{b11}) in (\ref{b0}):
{
\begin{equation*}
  r_{\tilde{p}}^{(s({\tilde{p}}),\beta)} = C_{p}^{s({p})} - \sum\limits_{(i,j) \in {\bar A}} {u}_{ij} - \Big( C_{{p}^*(s({\tilde{p}}),\beta)}^{s(p)} -
  \sum\limits_{(i,j) \in A[{{{p}^*(s({\tilde{p}}),\beta)}}] \cap {\bar A}}({u}_{ij} - {l}_{ij})\Big) = 
\end{equation*}}
{
\begin{equation}
  C_{p}^{s({p})} - C_{{p}^*(s({\tilde{p}}),\beta)}^{s(p)} +
  \sum\limits_{(i,j) \in A[{{{p}^*(s({\tilde{p}}),\beta)}}] \cap {\bar A}}({u}_{ij} - {l}_{ij}) -
  \sum\limits_{(i,j) \in {\bar A}} {u}_{ij}. \label{b12}
\end{equation}}

One may note that
{
\begin{equation}
\sum\limits_{(i,j) \in A[{{{p}^*(s({\tilde{p}}),\beta)}}] \cap {\bar A}}({u}_{ij} - {l}_{ij}) -
  \sum\limits_{(i,j) \in {\bar A}} {u}_{ij} \leq
%
\sum\limits_{(i,j) \in {\bar A}}({u}_{ij} - {l}_{ij}) -
  \sum\limits_{(i,j) \in {\bar A}} {u}_{ij} \leq
\sum\limits_{(i,j) \in {\bar A}}(-{l}_{ij}). \label{b13}
\end{equation}}

Since ${\bar A} \subset A$, and ${l}_{ij} \geq 0$ for all $(i,j) \in A$, it follows that $\sum\limits_{(i,j) \in {\bar A}}(- {l}_{ij}) \leq 0$ 
and, thus,
{
\begin{eqnarray}
\sum\limits_{(i,j) \in A[{{{p}^*(s({\tilde{p}}),\beta)}}] \cap {\bar A}}({u}_{ij} - {l}_{ij}) -
  \sum\limits_{(i,j) \in {\bar A}} {u}_{ij} \leq 0. \label{b14}
\end{eqnarray}}
 
From (\ref{b12}) and (\ref{b14}), 
{
\begin{eqnarray}
r_{\tilde{p}}^{(s({\tilde{p}}),\beta)} \leq C_{p}^{s({p})} - C_{{p}^*(s({\tilde{p}}),\beta)}^{s(p)}. \label{b15}
\end{eqnarray}}

As ${p}^*(s(p),\beta)$ is a path with the smallest cost in $s(p)$ among all the paths in $\mathcal{P}(\beta)$, including
${p}^*(s({\tilde{p}}),\beta)$, it holds that $C_{{p}^*(s({\tilde{p}}),\beta)}^{s(p)} \geq C_{{p}^*(s(p),\beta)}^{s(p)}$. Thus,
{
\begin{eqnarray}
r_{\tilde{p}}^{(s({\tilde{p}}),\beta)} \leq C_{p}^{s({p})} - C_{{p}^*(s({\tilde{p}}),\beta)}^{s(p)} \leq C_{p}^{s({p})} - C_{{p}^*(s(p),\beta)}^{s(p)}.\label{b16}
\end{eqnarray} }

By definition, $r_p^{(s(p),\beta)} = C_{p}^{s({p})} - C_{{p}^*(s(p),\beta)}^{s(p)}$. Therefore,
$r_{\tilde{p}}^{(s({\tilde{p}}),\beta)} \leq r_p^{(s(p),\beta)}$.
\end{proof}

Now, we can devise a mathematical formulation for R-RSP from the generic model $\tilde{\mathcal{R}}$, defined by (\ref{r00}) and (\ref{r01}).
Consider the decision variables $y$ on the choice of arcs belonging or not to a $\beta$-restricted
robust path: $y_{ij} = 1$ if the arc $(i,j) \in A$ belongs to the solution path; $y_{ij} = 0$, otherwise. Likewise, let the binary variables
$x$ identify a $\beta$-restricted shortest path in the scenario induced by the path defined by $y$, such that $x_{ij} = 1$ if the arc $(i,j) \in A$
belongs to this $\beta$-restricted shortest path, and $x_{ij} = 0$, otherwise. A nonlinear compact formulation for R-RSP is given by
\begin{eqnarray}
 \min \limits_{y \in \mathcal{P}(\beta)}{\Bigg( \sum \limits_{(i,j) \in A}{u_{ij}y_{ij}} - \min \limits_{x \in \mathcal{P}(\beta)}{
 \sum \limits_{(i,j) \in A}{(l_{ij} + (u_{ij} - l_{ij})y_{ij})x_{ij}}} \Bigg)}. \label{f00}
\end{eqnarray}

As discussed in Section~\ref{s_general_robust_pb}, we can derive an MILP formulation from (\ref{f00}) by adding a free variable $\rho$ and linear
constraints that explicitly bound $\rho$ with respect to all the feasible paths that $x$ can represent. The resulting formulation
is provided from \eqref{f01} to \eqref{f08}.
\begin{eqnarray}
  \min && \sum \limits_{(i,j) \in A}{u_{ij}y_{ij}} - \rho \label{f01} \\
  s.t. && \sum \limits_{j:(j,o) \in A}{y_{jo}} - \sum \limits_{k:(o,k) \in A}{y_{ok}} = -1, \label{f03} \\
        && \sum \limits_{j:(j,i) \in A}{y_{ji}} - \sum \limits_{k:(i,k) \in A}{y_{ik}} = 0 \quad \forall\, i \in V \backslash \{o,t\}, \label{f05} \\
        && \sum \limits_{j:(j,t) \in A}{y_{jt}} - \sum \limits_{k:(t,k) \in A}{y_{tk}} = 1, \label{f04} \\
        && \sum \limits_{(i,j) \in A}{d_{ij}y_{ij}} \leq \beta, \label{f06} \\
        && \rho \leq \sum \limits_{(i,j) \in A}{( l_{ij} + (u_{ij} - l_{ij})y_{ij})\bar{x}_{ij}} \quad \forall\, \bar{x} \in \mathcal{P}(\beta), \label{f02} \\
        && y_{ij} \in \{0,1\} \quad \forall\, (i,j) \in A,   \label{f07} \\
        && \rho \mbox{ free}. \label{f08}
\end{eqnarray}

The flow conservation constraints (\ref{f03})-(\ref{f04}), along with the domain constraints (\ref{f07}),
ensure that the $y$ variables define a path from the origin to the destination vertices. In fact, as pointed out in \cite{Karasan01},
these constraints do not prevent the existence of additional cycles of cost zero disjoint from the solution path. Notice,
however, that every arc $(i,j)$ of these cycles necessarily has $l_{ij} = u_{ij} = 0$ and, thus, they do not modify
the optimal solution value. Hence, these cycles are not taken into account hereafter.

Constraint (\ref{f06}) limits the resource consumption of the path defined by $y$ to be at most $\beta$, whereas constraints (\ref{f02}) guarantee
that $\rho$ does not exceed the value related to the inner minimization in (\ref{f00}). Note that, in (\ref{f02}), $\bar{x}$ is a constant vector,
one for each path in $\mathcal{P}(\beta)$.  Moreover, these constraints are tight whenever $\bar{x}$ identifies a $\beta$-restricted shortest path in
the scenario induced by the path that $y$ defines. Constraint (\ref{f08}) gives the domain of the variable $\rho$.

Notice that, in the definition of R-RSP, we do not impose that a vertex in the solution path must be traversed at most once. However, if that is the case,
Theorem~\ref{teo02} indicates that the formulation presented above can also be used to determine an elementary solution path for R-RSP by
simply discarding some edges from the cycles that may appear in the solution. In fact, Proposition~\ref{prop_rrsp_max} and Theorem~\ref{teo02} imply that,
for any $\beta \in \mathbb{Z}$, if $\mathcal{P}(\beta) \neq \emptyset$, then there is an elementary path $p \in \mathcal{P}(\beta)$ which is a $\beta$-restricted robust path.

In this study, we use the logic-based Benders' algorithm discussed in the Section~\ref{s_benders} to solve the formulation detailed above.
\subsubsection{An LP based heuristic for R-RSP}
In this section, we apply to R-RSP the heuristic framework presented in Section~\ref{s_heuristic}.
To this end, consider the following R-SP ILP formulation used to compute a $\beta$-restricted shortest path
$p^*(s,\beta) \in \mathcal{P}(\beta)$ in a scenario $s \in \mathcal{S}$. The binary variables $x$ define $p^*(s,\beta)$, such that $x_{ij} = 1$ if the arc $(i,j) \in A$ belongs to $A[p^*(s,\beta)]$, and $x_{ij} = 0$, otherwise.
\begin{eqnarray}
  \mbox{$(\mathcal{I}_1)\quad$}\min & & \sum \limits_{(i,j) \in A}{c_{ij}^sx_{ij}} \label{r-sp0} \\
  s.t. && \sum \limits_{j:(j,o) \in A}{x_{jo}} - \sum \limits_{k:(o,k) \in A}{x_{ok}} = -1, \label{r-sp1} \\
        && \sum \limits_{j:(j,i) \in A}{x_{ji}} - \sum \limits_{k:(i,k) \in A}{x_{ik}} = 0 \quad \forall\, i \in V \backslash \{o,t\}, \label{r-sp3} \\
        && \sum \limits_{j:(j,t) \in A}{x_{jt}} - \sum \limits_{k:(t,k) \in A}{x_{tk}} = 1, \label{r-sp2} \\
        && \sum \limits_{(i,j) \in A}{d_{ij}x_{ij}} \leq \beta, \label{r-sp4} \\
        && x_{ij} \in \{0,1\} \quad \forall\, (i,j) \in A. \label{r-sp5}
\end{eqnarray}

The objective function in (\ref{r-sp0}) represents the cost, in the scenario $s$, of the path defined by $x$, while constraints (\ref{r-sp1})-(\ref{r-sp5}) ensure that $x$ identifies a path in $\mathcal{P}(\beta)$. Relaxing the integrality on $x$, we obtain the following LP formulation:
\begin{eqnarray}
  \mbox{$(\mathcal{L}_1)\quad$}\theta^{(s,\,\beta)} = \min && \sum \limits_{(i,j) \in A}{c_{ij}^sx_{ij}} \label{r-spr0} \\
  s.t. && \mbox{Constraints (\ref{r-sp1})-(\ref{r-sp4}),} \nonumber \\
        && x_{ij} \geq 0 \quad \forall\, (i,j) \in A. \label{r-spr5}
\end{eqnarray}

The domain constraints $x_{ij} \leq 1$ for all $(i,j) \in A$ were omitted from $\mathcal{L}_1$, since they are redundant.
Let $\theta^{(s,\,\beta)}$ be the optimal value for problem $\mathcal{L}_1$ in a scenario $s$. Observe that $\theta^{(s,\,\beta)}$ provides a lower bound on
the solution of $\mathcal{I}_1$ in the scenario $s$. For the sake of clarity, let us define a new metric to evaluate the quality of a path in $\mathcal{P}(\beta)$.
\begin{definition}
The $\beta$-\emph{heuristic robustness cost} of a path $p \in \mathcal{P}(\beta)$, denoted by $H_p^{\beta}$, is the difference between the cost
$C_p^{s(p)}$ of $p$ in the scenario $s(p)$ induced by $p$ and the relaxed cost $\theta^{(s(p),\,\beta)}$ in $s(p)$, i.e., $H_p^{\beta} = C_p^{s(p)} -
\theta^{(s(p),\,\beta)}$.
\end{definition}
\begin{definition}
A path $\tilde{p}^* \in \mathcal{P}(\beta)$ is said to be a $\beta$-\emph{heuristic robust path} if it has the smallest $\beta$-heuristic robustness cost among all the paths in $\mathcal{P}(\beta)$, i.e., $\tilde{p}^* = \arg\min\limits_{p \in \mathcal{P}(\beta)}{H_{p}^{\beta}}$.
\end{definition}
In the case of R-RSP, the proposed heuristic aims at finding a $\beta$-heuristic robust path and relies on the hypothesis that such a path is a near-optimal
solution for R-RSP. The problem of finding a $\beta$-heuristic robust path can be modeled by adapting formulation (\ref{f00}).
To this end, the binary variables $y$ now represent a $\beta$-heuristic robust path in $\mathcal{P}(\beta)$. Furthermore, considering the scenario $s(y)$ induced by the path defined by $y$,
the nested minimization in (\ref{f00}) is replaced by $\theta^{(s({y}),\,\beta)}$. We obtain:
\begin{eqnarray}
 \min \limits_{y \in \mathcal{P}(\beta)} && {\Bigg( \sum \limits_{(i,j) \in A}{u_{ij}y_{ij}} - \theta^{(s(y),\,\beta)} \Bigg)}. \label{f00-relax}
\end{eqnarray}

Given a scenario $s \in \mathcal{S}$, the optimal value assumed by $\theta^{(s,\,\beta)}$ can be represented by the dual of $\mathcal{L}_1$, as follows:
\begin{eqnarray}
  \mbox{$(\tilde{\mathcal{L}_1})\quad$}\theta^{(s,\beta)} = \max & & (\lambda_{t} - \lambda_{o} - \beta\mu) \label{r-sprd0} \\
  s.t. && \lambda _j \leq \lambda_i + c_{ij}^s + d_{ij}\mu \quad \forall\, (i,j) \in A, \label{r-sprd1} \\
        && \mu \geq 0, \label{r-sprd2} \\
        && \lambda_k \mbox{ free} \quad \forall\, k \in V. \label{r-sprd3}
\end{eqnarray}

The dual variables $\{\lambda_k: k \in V\}$ and $\mu$ are associated, respectively, with constraints (\ref{r-sp1})-(\ref{r-sp2}) and with constraint (\ref{r-sp4}) in the primal problem $\mathcal{L}_1$. Since $\tilde{\mathcal{L}_1}$ is a maximization problem, its objective function, along with (\ref{r-sprd1})-(\ref{r-sprd3}), can be used to replace the relaxed cost $\theta^{(s(y),\beta)}$ in (\ref{f00-relax}), thus deriving the following formulation:
\begin{eqnarray}
 \min \limits_{y \in \mathcal{P}(\beta)} && {\Bigg( \sum \limits_{(i,j) \in A}{u_{ij}y_{ij}} -
 \overbrace{(\lambda_{t} - \lambda_{o} - \beta\mu)}^{\mbox{From ($\ref{r-sprd0}$)}} \Bigg)} \label{r-rspr0} \\
 s.t. && \lambda _j \leq \lambda_i + l_{ij} + (u_{ij}-l_{ij})y_{ij} + d_{ij}\mu \quad \forall\, (i,j) \in A, \label{r-rspr1} \\
        && \mu \geq 0, \label{r-rspr2} \\
        && \lambda_k \mbox{ free} \quad \forall\, k \in V. \label{r-rspr3}
\end{eqnarray}

Notice that constraints (\ref{r-rspr1}) consider the cost of each arc $(i,j) \in A$ in the scenario $s(y)$ induced by the path identified by the $y$ variables, i.e., the cost of each arc $(i,j) \in A$ is given by $(l_{ij} + (u_{ij}-l_{ij})y_{ij})$. The domain constraints (\ref{r-rspr2}) and (\ref{r-rspr3}) related to $\tilde{\mathcal{L}_1}$ remain the same. Now, we give an MILP formulation for the problem of finding a $\beta$-heuristic robust path.
\begin{eqnarray}
  \mbox{$(\mathcal{H}_1)\quad$}\min & & \Bigg(\sum \limits_{(i,j) \in A}{u_{ij}y_{ij}} - \lambda_{t} + \lambda_{o} + \beta\mu \Bigg)\label{r-rsph0} \\
  s.t. && \sum \limits_{j:(j,o) \in A}{y_{jo}} - \sum \limits_{k:(o,k) \in A}{y_{ok}} = -1, \label{r-rsph2} \\
        && \sum \limits_{j:(j,i) \in A}{y_{ji}} - \sum \limits_{k:(i,k) \in A}{y_{ik}} = 0 \quad \forall\, i \in V \backslash \{o,t\}, \label{r-rsph4} \\
        && \sum \limits_{j:(j,t) \in A}{y_{jt}} - \sum \limits_{k:(t,k) \in A}{y_{tk}} = 1, \label{r-rsph3} \\
        && \sum \limits_{(i,j) \in A}{d_{ij}y_{ij}} \leq \beta, \label{r-rsph5} \\
        && \mbox{Constraints (\ref{r-rspr1})-(\ref{r-rspr3})}, \nonumber \\
        && y_{ij} \in \{0,1\} \quad \forall\, (i,j) \in A.\label{r-rsph6}
\end{eqnarray}

The objective function in (\ref{r-rsph0}) represents the $\beta$-heuristic robustness cost of the path defined by the $y$ variables.  Constraints (\ref{r-rsph2})-(\ref{r-rsph5}) and (\ref{r-rsph6}) ensure that $y$ belongs to $\mathcal{P}(\beta)$.  Constraints (\ref{r-rspr1})-(\ref{r-rspr3}) are the remaining restrictions related to $\tilde{\mathcal{L}_1}$.

The heuristic consists in solving the corresponding problem $\mathcal{H}_1$ in order to find
a $\beta$-heuristic robust path $\tilde{p}^* \in \mathcal{P}(\beta)$.
Note that $\tilde{p}^*$ is also a feasible solution path for R-RSP, and, according to Proposition~\ref{prop01}, its $\beta$-heuristic
robustness cost provides an upper bound on the solution of R-RSP. Such bound can be improved by the evaluation of the actual $\beta$-restricted robustness
cost of $\tilde{p}^*$.
\subsection{The Robust Set Covering problem (RSC)}
The Robust Set Covering problem (RSC), introduced by Pereira and Averbakh \cite{Pereira11}, is an interval data min-max regret generalization of the Set Covering problem (SC).
The classical SC is known to be strongly NP-hard \cite{Garey79}, and, thus, the same holds for RSC.

Let $O = (o_{ij})$ be an $i \times j$ binary matrix, such that $I = \{1,\dots,i\}$ and $J = \{1,\dots,j\}$ are its
corresponding rows and columns sets, respectively.
We say that a column $j \in J$ \emph{covers} a row $i \in I$ if $o_{ij} = 1$. In this sense, a \emph{covering} is a subset $K \subseteq J$ of columns
such that every row in $I$ is covered by at least one column from $K$. Hereafter, we denote by $\Lambda$ the set of all possible coverings.

In the case of RSC, we associate with each column $j \in J$ a continuous cost interval [${l}_{j}$,${u}_{j}$], with $l_j,u_j \in \mathbb{Z}_+$ and ${l}_{j} \leq {u}_{j}$.
Accordingly, a scenario $s$ is an assignment of column costs, where a cost
$c_{j}^s \in [{l}_{j},{u}_{j}]$ is fixed for all $j \in J$.
The set of all these possible cost scenarios is denoted by $\mathcal{S}$, and the
cost of a covering $K \in \Lambda$ in a scenario $s \in \mathcal{S}$
is given by $C_K^s = \sum\limits_{j \in K}{c_{j}^s}$.
\begin{definition}
  A covering $K^*(s) \in \Lambda$ is said to be \emph{optimal}
  in a scenario $s \in \mathcal{S}$ if it has the smallest cost in $s$ among all coverings in
  $\Lambda$, i.e., $K^*(s) = \arg\min\limits_{K \in \Lambda}{C_{K}^s}$.
\end{definition}
\begin{definition}
  The \emph{regret (robust deviation)} of a covering $K \in \Lambda$ in
  a scenario $s \in \mathcal{S}$, denoted by $r^{s}_K$, is the
  difference between the cost $C^s_K$ of $K$ in $s$ and the cost of an
  optimal covering $K^*(s) \in \Lambda$ in $s$, i.e., $r^{s}_K = C^s_K -
  C^s_{K^*(s)}$.
\end{definition}
\begin{definition}
  The \emph{robustness cost} of a covering $K \in
  \Lambda$, denoted by $R_K$, is the maximum regret of $K$ among all possible scenarios, i.e., $R_K =
  \max\limits_{s \in \mathcal{S}}{r^{s}_K}$.
\end{definition}
\begin{definition}
  A covering $K^* \in \Lambda$ is said to be a
  \emph{robust covering} if it has the smallest
  robustness cost among all coverings in
  $\Lambda$, i.e., $K^* = \arg\min\limits_{K \in
    \Lambda}{R_{K}}$.
\end{definition}
\begin{definition}
  \emph{RSC} consists in finding a robust covering $K^* \in \Lambda$.
\end{definition}

As far as we are aware, RSC was only addressed in \cite{Pereira11} and \cite{Amadeu15}. On the other hand, the classical SC has been widely studied in the literature,
especially because it can be used as a basic model for applications in several fields, including production planning \cite{Cochran05} and crew management \cite{Caprara97}.
We refer to \cite{Ceria97} for an annotated bibliography on the main applications and the state-of-the-art of SC. 
We highlight that, as well as for the classical counterpart of R-RSP, directly solving compact ILP formulations for RSC via CPLEX's \emph{branch-and-bound} is competitive
with the best exact algorithms for the problem \cite{Caprara2000}. 
\subsubsection{Mathematical formulation}
The mathematical formulation presented below was proposed by Pereira and Averbakh \cite{Pereira11}. As the one for R-RSP presented in
Section~\ref{s_model}, this formulation is based on the modeling technique for interval min-max regret
problems in general discussed in Section~\ref{s_general_robust_pb}. For the specific case of RSC, Theorem~\ref{teo01} can be stated as follows.
\begin{proposition}
\label{prop_rsc_max}
Given a covering $X \in \Lambda$, the regret of $K$ is maximum in the
scenario induced by $K$, where the costs of all the columns of $K$ are in their corresponding upper bounds and the costs of all the other columns are in their
lower bounds.
 \end{proposition}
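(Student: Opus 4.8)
The plan is to recognize Proposition~\ref{prop_rsc_max} as the specialization of Theorem~\ref{teo01} to the 0-1 ILP $\mathcal{G}$ that models the classical Set Covering problem: take the binary vector $y$ to index the columns of $O$ (so $y_j = 1$ iff column $j$ is selected), let the feasible region $\Omega$ be exactly the family $\Lambda$ of coverings, and let $c^s$ be the column-cost vector in scenario $s$. Under this dictionary the scenario $s(K)$ described in the statement is precisely the scenario induced by the incidence vector of $K$ in the sense of Theorem~\ref{teo01}, and, since $l_j \le u_j$ for every column $j$, it is a legitimate element of $\mathcal{S}$. I would state this identification first, so that the proposition is an immediate consequence of Theorem~\ref{teo01}.

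For completeness I would then give a self-contained argument that $r_K^s \le r_K^{s(K)}$ for every $s \in \mathcal{S}$. Writing $r_K^s = C_K^s - C_{K^*(s)}^s$, the inequality to be established is
\[
C_K^{s(K)} - C_K^s \;\ge\; C_{K^*(s(K))}^{s(K)} - C_{K^*(s)}^s ,
\]
that is, the increase in the cost of $K$ when passing from $s$ to $s(K)$ dominates the increase in the optimal covering cost. The left-hand side equals $\sum\limits_{j \in K}(u_j - c_j^s) \ge 0$. For the right-hand side I would first use optimality of $K^*(s(K))$ in $s(K)$ to get $C_{K^*(s(K))}^{s(K)} \le C_{K^*(s)}^{s(K)}$, and then split $C_{K^*(s)}^{s(K)} - C_{K^*(s)}^s = \sum\limits_{j \in K^*(s)}(c_j^{s(K)} - c_j^s)$ according to whether $j \in K$: for $j \in K^*(s)\setminus K$ one has $c_j^{s(K)} = l_j \le c_j^s$, so those terms are nonpositive, while for $j \in K^*(s)\cap K$ the term is exactly $u_j - c_j^s$. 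Hence the right-hand side is at most $\sum\limits_{j \in K^*(s)\cap K}(u_j - c_j^s) \le \sum\limits_{j \in K}(u_j - c_j^s)$, since $K^*(s)\cap K \subseteq K$ and every summand is nonnegative, and this last quantity is precisely the left-hand side.

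The argument is essentially routine; the only place that needs a little care is the comparison of the \emph{optimal} covering costs in the two scenarios, which is handled by evaluating the $s$-optimal covering $K^*(s)$ in the scenario $s(K)$ and exploiting that $c_j^{s(K)} = l_j \le c_j^s$ for every column $j \notin K$. No other obstacle is anticipated: the combinatorial structure of $\Lambda$ enters only through the trivial inclusion $K^*(s) \cap K \subseteq K$, which is exactly why the generic Theorem~\ref{teo01} applies here without modification.
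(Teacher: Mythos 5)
Your proposal is correct and follows the same route as the paper, which offers no separate proof at all: it simply states Proposition~\ref{prop_rsc_max} as the specialization of Theorem~\ref{teo01} (cited from Aissi et al.) to the set covering instance of the generic 0-1 ILP $\mathcal{G}$, exactly the identification you make first. Your additional self-contained verification that $r_K^s \leq r_K^{s(K)}$ — comparing costs of $K$ and of $K^*(s)$ across the two scenarios and using $c_j^{s(K)} = l_j \leq c_j^s$ off $K$ and $K^*(s)\cap K \subseteq K$ — is sound and merely supplies the argument the paper delegates to the cited reference.
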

 
Now, consider the decision variables $y$ on the choice of columns belonging or not to a robust covering, such that $y_{j} = 1$ if the column $j \in J$
belongs to the solution; $y_{j} = 0$, otherwise. Also let $\rho$ be a continuous variable. The MILP formulation of \cite{Pereira11} is provided from \eqref{f_rsc01} to \eqref{f_rsc05}.
\begin{eqnarray}
  \min && \sum \limits_{j \in J}{u_{j}y_{j}} - \rho \label{f_rsc01} \\
  s.t. && \sum \limits_{j \in J}{o_{ij}y_j} \geq 1 \quad \forall\, i \in I, \label{f_rsc02} \\
        && \rho \leq \sum \limits_{j \in J}{( l_{j} + (u_{j} - l_{j})y_{j})\bar{x}_{j}} \quad \forall\, \bar{x} \in \Lambda, \label{f_rsc03} \\
        && y_{j} \in \{0,1\} \quad \forall\, j \in J, \label{f_rsc04} \\
        && \rho \mbox{ free}. \label{f_rsc05}
\end{eqnarray}

The objective function in (\ref{f_rsc01}) considers Proposition~\ref{prop_rsc_max} and gives the robustness cost of a robust covering.
Constraints (\ref{f_rsc02}) and (\ref{f_rsc04}) ensure that the $y$ variables represent a covering, whereas constraints (\ref{f_rsc03}) guarantee
that $\rho$ does not exceed the cost of an optimal covering in the scenario induced by the covering defined by $y$. In (\ref{f_rsc03}),
$\bar{x}$ is a constant vector, one for each possible covering. Constraint (\ref{f_rsc05}) gives the domain of the variable $\rho$.

As for R-RSP, we use the logic-based Benders' algorithm discussed in the Section~\ref{s_benders} to solve the formulation described above.
\subsubsection{An LP based heuristic for RSC}
Consider the following ILP formulation used to model the classical SC in a given cost scenario $s \in \mathcal{S}$. Here,
the binary variables $x$ define an optimal covering $K^*(s) \in \Lambda$ in $s$, such that $x_{j} = 1$ if the column
$j \in J$ belongs to $K^*(s)$, and $x_{j} = 0$, otherwise.
\begin{eqnarray}
  \mbox{$(\mathcal{I}_2)\quad$}\min && \sum \limits_{j \in J}{c^s_{j}x_{j}} \label{i_rsc01} \\
  s.t. && \sum \limits_{j \in J}{o_{ij}x_j} \geq 1 \quad \forall\, i \in I, \label{i_rsc02} \\
        && x_{j} \in \{0,1\} \quad \forall\, j \in J. \label{i_rsc03}
\end{eqnarray}

The objective function in (\ref{i_rsc01}) gives the cost, in the scenario $s$, of the covering defined by $x$,
while constraints (\ref{i_rsc02}) and (\ref{i_rsc03}) ensure that $x$ identifies a covering in $\Lambda$.
Relaxing the integrality on $x$, we obtain:
\begin{eqnarray}
  \mbox{$(\mathcal{L}_2)\quad$}\min && \sum \limits_{j \in J}{c^s_{j}x_{j}} \label{lp_rsc01} \\
  s.t. && \mbox{Constraint (\ref{i_rsc02}),} \nonumber \\
        && x_{j} \geq 0 \quad \forall\, j \in J. \label{lp_rsc02}
\end{eqnarray}

Notice that we omitted the domain constraints $x_{j} \leq 1$ for all $j \in J$, since they are redundant in this case.
Considering formulations $\mathcal{I}_2$, $\mathcal{L}_2$ and the generic model $\hat{\mathcal{R}}$, given by (\ref{r01}), (\ref{g05}), (\ref{g06}),
(\ref{g12}) and (\ref{g08}), we can devise a heuristic formulation for RSC. Now, the $y$ variables represent a heuristic solution covering, and
the variables $\{\lambda_i: i \in I\}$ are the ones of the dual problem related to $\mathcal{L}_2$.
\begin{eqnarray}
  \mbox{$(\mathcal{H}_2)\quad$}\min & & \Bigg(\sum \limits_{j \in J}{u_{j}y_{j}} - \sum \limits_{i \in I}{\lambda_{i}}\Bigg)\label{h_rsc01} \\
  s.t. 	&& \sum \limits_{j \in J}{o_{ij}y_j} \geq 1 \quad \forall\, i \in I, \label{h_rsc02} \\
	&& \sum \limits_{i \in I}{o_{ij}\lambda_i} \leq l_{j} + (u_{j}-l_{j})y_{j} \quad \forall\, j \in J, \label{h_rsc03} \\
	&& \lambda_{i} \geq 0 \quad \forall\, i \in I,\label{h_rsc04} \\
        && y_{j} \in \{0,1\} \quad \forall\, j \in J.\label{h_rsc05}
\end{eqnarray}

According to Proposition~\ref{prop01}, the objective function in (\ref{h_rsc01}) gives an upper bound on the robustness cost of a robust solution for RSC.
Constraints (\ref{h_rsc02}) and (\ref{h_rsc05}) ensure that $y$ defines a covering in $\Lambda$, while constraints (\ref{h_rsc03}) are the ones
related to the dual of ${\mathcal{L}_2}$.
Notice that constraints (\ref{h_rsc03}) consider the cost of each column $j \in J$ in the scenario induced by the covering identified by the $y$ variables.
Restrictions (\ref{h_rsc04}) give the domain of the dual variables $\lambda$.

In the case of RSC, the LP based heuristic consists in solving formulation $\mathcal{H}_2$ and, then, computing the robustness cost of the solution obtained.
\section{Computational experiments}
\label{s_results}
In this section, we evaluate, out of computational experiments, the effectiveness and the time efficiency of the proposed heuristic at solving
the two interval robust-hard problems considered in this study.
For short, the Logic-Based Benders' decomposition algorithm is referred to as LB-Benders', whereas the LP based Heuristic is referred to as LPH.
LB-Benders', LPH and the 2-approximation heuristic for interval min-max regret problems, namely AMU~\cite{Kasperski05}, were implemented in C++,
along with the optimization solver ILOG CPLEX 12.5.  The computational experiments were performed on a 64 bits
Intel$^{\textregistered}$ Xeon$^{\textregistered}$ E5405 machine with 2.0 GHz and 7.0 GB of RAM, under Linux operating system. LB-Benders' was set to run
for up to 3600 seconds of wall-clock time.
\subsection{The Restricted Robust Shortest Path problem (R-RSP)}
In all of the algorithms implemented, whenever a classical R-SP instance had to be solved, we used CPLEX to handle the ILP formulation $\mathcal{I}_1$, defined by (\ref{r-sp0})-(\ref{r-sp5}),
directly. We also used CPLEX to solve each master problem in LB-Benders' and the heuristic formulation $\mathcal{H}_1$, defined by (\ref{r-rsph0})-(\ref{r-rsph6})
\subsubsection{Benchmarks description}
Due to the lack of R-RSP instances in the literature, we generated two benchmarks of instances inspired by the applications described in Section~\ref{s_description}. These benchmarks were adapted from two sets of RSP instances: \emph{Kara\c{s}an} \cite{Karasan01} and \emph{Coco} \cite{amadeu14} instances, which model, respectively, telecommunications and urban transportation networks.

Kara\c{s}an instances have been largely used in experiments concerning RSP \cite{MonteGa04c,Karasan01,MonteGa04b,amadeu14,MonteGa04a}. They consist of layered \cite{Sug81} and acyclic \cite{Bondy76} digraphs. In these digraphs, each of the $\kappa$ layers has the same number $\omega$ of vertices. There is an arc from every vertex in a layer $b \in \{1,\dots,\kappa-1\}$ to every vertex in the adjacent layer $b+1$. Moreover, there is an arc from the origin $o$ to every vertex in the first layer, and an arc from every vertex in the layer $\kappa$ to the destination vertex $t$.
These instances are named K-$v$-${\Phi}_{max}$-$\delta$-$\omega$, where $v$ is the number of vertices (aside from $o$ and $t$),  
${\Phi}_{max}$ is an integer constant, and $0 < \delta < 1$ is a continuous value. The arc cost intervals were generated as follows.
For each arc $(i,j) \in A$, a random integer value ${\Phi}_{ij}$ was uniformly chosen in the range $[1,{\Phi}_{max}]$.
Afterwards, random integer values $l_{ij}$ and $u_{ij}$ were uniformly selected, respectively, in the ranges
$[(1-\delta)\cdot{\Phi}_{ij},(1+\delta)\cdot{\Phi}_{ij}]$ and $[l_{ij},(1+\delta)\cdot{\Phi}_{ij}]$.  Note that ${\Phi}$ plays the role of a base-case
scenario, and $\delta$ determines the degree of uncertainty. Figure~\ref{fig_inst_a} shows an example of an acyclic digraph with 3 layers of width 2.
\begin{figure}[!ht]
\center
\includegraphics*[scale=0.81]{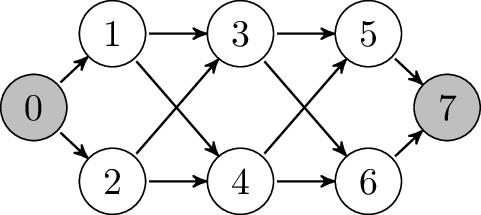}
\caption{An acyclic digraph with 3 layers of width 2. Here, $o = 0$ and $t = 7$.}
\label{fig_inst_a}
\end{figure}

Coco instances consist of grid digraphs based on $n \times m$ matrices, where $n$ is the number of rows and $m$ is the number of columns. Each matrix cell corresponds to a vertex in the digraph, and there are two bidirectional arcs between each pair of vertices whose respective matrix cells are adjacent. The origin $o$ is defined as the upper left vertex, and the destination $t$ is defined as the lower right vertex. These instances are named G-$n$$\times$$m$-${\Phi}_{max}$-$\delta$, with $0 < \delta < 1$, where ${\Phi}_{max}$ is an integer value. Given ${\Phi}_{max}$ and $\delta$ values, the cost intervals are generated as in the Kara\c{s}an instances.  Figure~\ref{fig_inst_b} gives an example of a grid digraph.
\begin{figure}[!ht]
\center
\includegraphics*[scale=0.87]{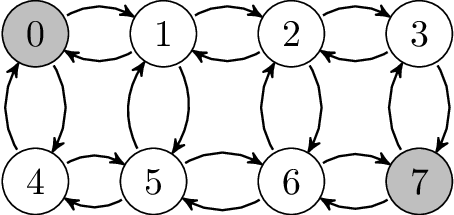}
\caption{A $2 \times 4$ grid digraph, with $o = 0$ and $t = 7$.}
\label{fig_inst_b}
\end{figure}

For all instances, the resource consumption associated with each arc is given by a random integer value uniformly selected in the interval $(0,10]$. The small interval amplitude allows the generation of instances in which most of the arcs are candidate to appear in an optimal solution, increasing the number of feasible solutions. The symmetry with respect to arc resource consumptions was preserved, i.e., we considered $d_{ij} = d_{ji}$ for any pair of adjacent vertices $i$ and $j$ such that $(i,j) \in A$ and $(j,i) \in A$.

The resource consumption limit $\beta$ of a given instance was computed as follows. Consider the set $\mathcal{P}$ of all the paths from $o$ to $t$, and let $\bar{p} \in \mathcal{P}$ be a shortest path in terms of resource consumption, i.e., $\bar{p} = \arg\min\limits_{p \in \mathcal{P}}{D_{p}}$. We set $\beta = 1.1 \cdot D_{\bar{p}}$, which means that is given a $10\%$ tolerance with respect to the minimum resource consumption $D_{\bar{p}}$. This way, the resource consumption limit is tighter.

We generated Kara\c{s}an and Coco instances of 1000 and 2000 vertices, with
$\Phi_{max} \in \{20, 200\}$, $\delta \in \{0.5, 0.9\}$ and $\omega \in \{5, 10, 25\}$. Considering these values, a group of 10 instances was generated for each possible parameters
configuration. In summary, 480 instances were used in the experiments.
\subsubsection{Results}
Computational experiments were carried out in order to evaluate if the proposed heuristic efficiently finds optimal or near-optimal solutions for the two benchmarks of instances described above.
Results for Kara\c{s}an and Coco instances are reported in Tables \ref{table01} and \ref{table02}, respectively. The first column displays the name of each group of 10 instances. The second and third columns show, respectively, the number of instances solved at optimality by LB-Benders' within 3600 seconds, and the average wall-clock processing time (in seconds) spent in solving these instances. If no instance in the group was solved at optimality, this entry is filled with a dash.
The fourth and fifth columns show, respectively, the average and the standard deviation (over the 10 instances) of the relative optimality gaps given by $100
\cdot \frac{UB_b - LB_b}{UB_b}$. Here, $LB_b$ and $UB_b$ are, respectively, the best lower and upper bounds obtained by LB-Benders' for a given instance.
The sixth column displays the average wall-clock processing time (in seconds) of AMU. The seventh column shows the average (over the 10 instances) of the relative gaps given by $100 \cdot \frac{UB_{amu} - LB_b}{UB_{amu}}$,
where $UB_{amu}$ is the best upper bound obtained by AMU for a given instance. The standard deviation of these gaps is given in the eighth column.
Likewise, the ninth column shows the average wall-clock processing time of LPH, and the last two columns give the average and
the standard deviation (over the 10 instances) of the gaps given by $100 \cdot \frac{UB_{lph} - LB_b}{UB_{lph}}$.
Here, $UB_{lph}$ is the $\beta$-restricted robustness cost of the solution obtained by LPH for a given instance.

Notice that the gaps referred to the solutions obtained by AMU and LPH consider the best lower bounds obtained by LB-Benders' (within 3600 seconds of
execution), which might not correspond to the cost of optimal solutions. Thus, the aforementioned gaps may
overestimate the actual gaps between the cost of the solutions obtained and the cost of optimal ones.

\begin{landscape}
\begin{table}[!h]
\caption{Computational results for the layered and acyclic digraph instances.}
\label{table01}
\center
\scalebox{0.7}{
\begin{tabular}{lrrrrrrrrrrrr}
\toprule
\multicolumn{1}{r}{} & \multicolumn{4}{c}{\textbf{LB-Benders'}} & & \multicolumn{3}{c}{\textbf{AMU}} & & \multicolumn{3}{c}{\textbf{LPH}}\\
\cmidrule{2-5} \cmidrule{7-9} \cmidrule{11-13}
Test set & {\#opt} & {Time (s)} & {AvgGAP (\%)} & {StDev (\%)} & & Time (s) & {AvgGAP (\%)} & {StDev (\%)} & & Time (s) & {AvgGAP (\%)} & {StDev (\%)}\\
\midrule
K-1000-20-0.5-5 & 8~~ & 1515.25~~ &0.11~~~~~~ & 0.24~~~~ & & 3.13~~ & 4.21~~~~~~  & 3.59~~~~ & &10.33~~ &0.11~~~~~~  &0.24~~~~~\\
K-1000-20-0.9-5 & 2~~ & 2108.61~~ &7.47~~~~~~ & 5.29~~~~ & & 3.12~~ & 10.06~~~~~~  & 4.21~~~~ & &28.84~~ &5.79~~~~~~  &3.75~~~~~\\
K-1000-200-0.5-5 & 10~~ & 1165.08~~ &0.00~~~~~~ & 0.00~~~~ & & 3.13~~ & 3.68~~~~~~  & 2.18~~~~ & &8.52~~ &0.16~~~~~~  &0.28~~~~~\\
K-1000-200-0.9-5 & 1~~ & 1919.16~~ &4.85~~~~~~ & 3.30~~~~ & & 3.07~~ & 9.77~~~~~~  & 1.83~~~~ & &21.36~~ &3.80~~~~~~  &2.57~~~~~\\
\cmidrule(lr){1-13}
K-1000-20-0.5-10 & 10~~ & 83.30~~ &0.00~~~~~~ & 0.00~~~~ & & 4.15~~ & 3.06~~~~~~  & 2.70~~~~ & &11.23~~ &0.07~~~~~~  &0.22~~~~~\\
K-1000-20-0.9-10 & 10~~ & 218.69~~ &0.00~~~~~~ & 0.00~~~~ & & 4.17~~ & 4.52~~~~~~  & 3.20~~~~ & &17.75~~ &0.00~~~~~~  &0.00~~~~~\\
K-1000-200-0.5-10 & 10~~ & 42.54~~ &0.00~~~~~~ & 0.00~~~~ & & 3.96~~ & 1.57~~~~~~  & 1.84~~~~ & &10.45~~ &0.31~~~~~~  &0.97~~~~~\\
K-1000-200-0.9-10 & 10~~ & 413.43~~ &0.00~~~~~~ & 0.00~~~~ & & 3.94~~ & 4.50~~~~~~  & 2.98~~~~ & &24.44~~ &0.05~~~~~~  &0.15~~~~~\\
\cmidrule(lr){1-13}
K-1000-20-0.5-25 & 10~~ & 17.61~~ &0.00~~~~~~ & 0.00~~~~ & & 7.35~~ & 3.12~~~~~~  & 4.52~~~~ & &21.55~~ &0.00~~~~~~  &0.00~~~~~\\
K-1000-20-0.9-25 & 10~~ & 32.22~~ &0.00~~~~~~ & 0.00~~~~ & & 7.51~~ & 2.35~~~~~~  & 3.69~~~~ & &26.97~~ &0.24~~~~~~  &0.76~~~~~\\
K-1000-200-0.5-25 & 10~~ & 18.17~~ &0.00~~~~~~ & 0.00~~~~ & & 7.25~~ & 1.27~~~~~~  & 2.71~~~~ & &23.67~~ &0.00~~~~~~  &0.00~~~~~\\
K-1000-200-0.9-25 & 10~~ & 41.04~~ &0.00~~~~~~ & 0.00~~~~ & & 7.33~~ & 1.13~~~~~~  & 1.89~~~~ & &30.92~~ &0.00~~~~~~  &0.00~~~~~\\
\cmidrule(lr){1-13}
K-2000-20-0.5-5 & 0~~ & 0.00~~ &14.47~~~~~~ & 5.89~~~~ & & 10.61~~ & 14.30~~~~~~  & 4.06~~~~ & &64.61~~ &10.43~~~~~~  &3.87~~~~~\\
K-2000-20-0.9-5 & 0~~ & 0.00~~ &25.01~~~~~~ & 2.69~~~~ & & 10.68~~ & 21.35~~~~~~  & 2.17~~~~ & &241.13~~ &17.94~~~~~~  &2.38~~~~~\\
K-2000-200-0.5-5 & 0~~ & 0.00~~ &14.45~~~~~~ & 3.10~~~~ & & 10.38~~ & 14.15~~~~~~  & 2.87~~~~ & &69.62~~ &10.70~~~~~~  &2.21~~~~~\\
K-2000-200-0.9-5 & 0~~ & 0.00~~ &25.77~~~~~~ & 3.39~~~~ & & 10.43~~ & 22.13~~~~~~  & 3.17~~~~ & &454.79~~ &18.15~~~~~~  &2.79~~~~~\\
\cmidrule(lr){1-13}
K-2000-20-0.5-10 & 8~~ & 1297.89~~ &0.91~~~~~~ & 2.23~~~~ & & 13.01~~ & 4.37~~~~~~  & 3.69~~~~ & &92.04~~ &0.89~~~~~~  &1.99~~~~~\\
K-2000-20-0.9-10 & 0~~ & 0.00~~ &7.34~~~~~~ & 4.06~~~~ & & 12.75~~ & 9.46~~~~~~  & 4.30~~~~ & &252.91~~ &5.91~~~~~~  &3.45~~~~~\\
K-2000-200-0.5-10 & 4~~ & 846.44~~ &1.37~~~~~~ & 2.30~~~~ & & 12.34~~ & 4.12~~~~~~  & 2.94~~~~ & &112.94~~ &1.04~~~~~~  &1.66~~~~~\\
K-2000-200-0.9-10 & 0~~ & 0.00~~ &5.99~~~~~~ & 2.44~~~~ & & 12.39~~ & 8.78~~~~~~  & 3.54~~~~ & &211.80~~ &4.77~~~~~~  &2.31~~~~~\\
\cmidrule(lr){1-13}
K-2000-20-0.5-25 & 10~~ & 155.30~~ &0.00~~~~~~ & 0.00~~~~ & & 21.72~~ & 5.98~~~~~~  & 7.21~~~~ & &122.22~~ &0.00~~~~~~  &0.00~~~~~\\
K-2000-20-0.9-25 & 10~~ & 408.79~~ &0.00~~~~~~ & 0.00~~~~ & & 21.60~~ & 1.46~~~~~~  & 1.54~~~~ & &222.00~~ &0.07~~~~~~  &0.22~~~~~\\
K-2000-200-0.5-25 & 10~~ & 138.88~~ &0.00~~~~~~ & 0.00~~~~ & & 21.88~~ & 1.53~~~~~~  & 2.11~~~~ & &121.82~~ &0.04~~~~~~  &0.13~~~~~\\
K-2000-200-0.9-25 & 10~~ & 572.74~~ &0.00~~~~~~ & 0.00~~~~ & & 21.18~~ & 2.70~~~~~~  & 2.76~~~~ & &215.21~~ &0.02~~~~~~  &0.05~~~~~\\
\cmidrule{1-13}
\textbf{Average}        & & & 4.49~~~~~~ & 1.46~~~~ & & & 6.65~~~~~~ & 3.15~~~~~ & & &3.35~~~~~~ &1.25~~~~~ \\
\bottomrule
\end{tabular}}
\end{table}
\end{landscape}

\begin{landscape}
\begin{table}[!h]
\caption{Computational results for the grid digraph instances.}
\label{table02}
\center
\scalebox{0.7}{
\begin{tabular}{lrrrrrrrrrrrr}
\toprule
\multicolumn{1}{r}{} & \multicolumn{4}{c}{\textbf{LB-Benders'}} & & \multicolumn{3}{c}{\textbf{AMU}} & & \multicolumn{3}{c}{\textbf{LPH}}\\
\cmidrule{2-5} \cmidrule{7-9} \cmidrule{11-13}
Test set & {\#opt} & {Time (s)} & {AvgGAP (\%)} & {StDev (\%)} & & Time (s) & {AvgGAP (\%)} & {StDev (\%)} & & Time (s) & {AvgGAP (\%)} & {StDev (\%)}\\
\midrule
G-32x32-20-0.5 & 10~~ & 7.31~~ &0.00~~~~~~ & 0.00~~~~ & & 3.21~~ & 4.16~~~~~~  & 9.20~~~~ & &3.80~~ &0.22~~~~~~  &0.70~~~~~\\
G-32x32-20-0.9 & 10~~ & 8.63~~ &0.00~~~~~~ & 0.00~~~~ & & 3.43~~ & 5.38~~~~~~  & 7.49~~~~ & &5.01~~ &0.69~~~~~~  &1.67~~~~~\\
G-32x32-200-0.5 & 10~~ & 6.83~~ &0.00~~~~~~ & 0.00~~~~ & & 3.35~~ & 1.56~~~~~~  & 2.72~~~~ & &3.88~~ &3.11~~~~~~  &4.72~~~~~\\
G-32x32-200-0.9 & 10~~ & 9.79~~ &0.00~~~~~~ & 0.00~~~~ & & 3.22~~ & 3.90~~~~~~  & 5.89~~~~ & &5.33~~ &0.00~~~~~~  &0.00~~~~~\\
\cmidrule(lr){1-13}
G-20x50-20-0.5 & 10~~ & 6.66~~ &0.00~~~~~~ & 0.00~~~~ & & 2.86~~ & 2.71~~~~~~  & 5.94~~~~ & &3.96~~ &0.42~~~~~~  &1.34~~~~~\\
G-20x50-20-0.9 & 10~~ & 9.14~~ &0.00~~~~~~ & 0.00~~~~ & & 3.00~~ & 1.80~~~~~~  & 2.93~~~~ & &4.76~~ &0.76~~~~~~  &1.10~~~~~\\
G-20x50-200-0.5 & 10~~ & 6.23~~ &0.00~~~~~~ & 0.00~~~~ & & 3.24~~ & 2.54~~~~~~  & 5.09~~~~ & &4.47~~ &0.78~~~~~~  &2.45~~~~~\\
G-20x50-200-0.9 & 10~~ & 13.43~~ &0.00~~~~~~ & 0.00~~~~ & & 3.15~~ & 2.33~~~~~~  & 4.43~~~~ & &5.43~~ &0.58~~~~~~  &0.74~~~~~\\
\cmidrule(lr){1-13}
G-5x200-20-0.5 & 10~~ & 397.06~~ &0.00~~~~~~ & 0.00~~~~ & & 2.92~~ & 5.29~~~~~~  & 4.89~~~~ & &10.97~~ &0.10~~~~~~  &0.31~~~~~\\
G-5x200-20-0.9 & 9~~ & 670.84~~ &0.28~~~~~~ & 0.89~~~~ & & 2.91~~ & 4.64~~~~~~  & 1.69~~~~ & &19.83~~ &0.36~~~~~~  &0.87~~~~~\\
G-5x200-200-0.5 & 10~~ & 153.69~~ &0.00~~~~~~ & 0.00~~~~ & & 2.97~~ & 4.82~~~~~~  & 3.57~~~~ & &8.75~~ &0.29~~~~~~  &0.66~~~~~\\
G-5x200-200-0.9 & 8~~ & 1059.09~~ &0.20~~~~~~ & 0.50~~~~ & & 3.01~~ & 8.00~~~~~~  & 3.73~~~~ & &22.34~~ &0.32~~~~~~  &0.57~~~~~\\
\cmidrule(lr){1-13}
G-44x44-20-0.5 & 10~~ & 23.52~~ &0.00~~~~~~ & 0.00~~~~ & & 10.36~~ & 0.97~~~~~~  & 1.92~~~~ & &13.15~~ &0.81~~~~~~  &1.42~~~~~\\
G-44x44-20-0.9 & 10~~ & 29.43~~ &0.00~~~~~~ & 0.00~~~~ & & 10.32~~ & 3.92~~~~~~  & 8.06~~~~ & &15.50~~ &0.55~~~~~~  &1.04~~~~~\\
G-44x44-200-0.5 & 10~~ & 23.00~~ &0.00~~~~~~ & 0.00~~~~ & & 9.77~~ & 5.05~~~~~~  & 4.02~~~~ & &14.58~~ &0.00~~~~~~  &0.00~~~~~\\
G-44x44-200-0.9 & 10~~ & 38.13~~ &0.00~~~~~~ & 0.00~~~~ & & 9.53~~ & 2.70~~~~~~  & 3.23~~~~ & &19.98~~ &0.13~~~~~~  &0.22~~~~~\\
\cmidrule(lr){1-13}
G-20x100-20-0.5 & 10~~ & 34.77~~ &0.00~~~~~~ & 0.00~~~~ & & 10.90~~ & 2.81~~~~~~  & 5.26~~~~ & &20.38~~ &0.00~~~~~~  &0.00~~~~~\\
G-20x100-20-0.9 & 10~~ & 90.28~~ &0.00~~~~~~ & 0.00~~~~ & & 10.76~~ & 5.41~~~~~~  & 4.98~~~~ & &29.39~~ &0.61~~~~~~  &1.13~~~~~\\
G-20x100-200-0.5 & 10~~ & 47.27~~ &0.00~~~~~~ & 0.00~~~~ & & 11.39~~ & 1.64~~~~~~  & 1.97~~~~ & &20.17~~ &0.25~~~~~~  &0.66~~~~~\\
G-20x100-200-0.9 & 10~~ & 59.16~~ &0.00~~~~~~ & 0.00~~~~ & & 10.53~~ & 3.37~~~~~~  & 3.27~~~~ & &26.32~~ &0.21~~~~~~  &0.37~~~~~\\
\cmidrule(lr){1-13}
G-5x400-20-0.5 & 1~~ & 3444.71~~ &2.57~~~~~~ & 1.98~~~~ & & 10.26~~ & 6.82~~~~~~  & 3.90~~~~ & &62.50~~ &2.14~~~~~~  &1.62~~~~~\\
G-5x400-20-0.9 & 0~~ & 0.00~~ &7.76~~~~~~ & 3.95~~~~ & & 10.20~~ & 10.83~~~~~~  & 3.16~~~~ & &150.45~~ &5.67~~~~~~  &2.89~~~~~\\
G-5x400-200-0.5 & 1~~ & 2719.17~~ &5.24~~~~~~ & 4.32~~~~ & & 10.27~~ & 8.67~~~~~~  & 3.59~~~~ & &79.37~~ &3.94~~~~~~  &3.17~~~~~\\
G-5x400-200-0.9 & 0~~ & 0.00~~ &11.50~~~~~~ & 3.45~~~~ & & 9.93~~ & 13.27~~~~~~  & 3.75~~~~ & &249.81~~ &8.19~~~~~~  &2.24~~~~~\\
\cmidrule{1-13}
\textbf{Average}       & & & 1.15~~~~~~ & 0.63~~~~ & & & 4.69~~~~~~ & 4.36~~~~~ & & &1.26~~~~~~ &1.24~~~~~ \\
\bottomrule
\end{tabular}}
\end{table}
\end{landscape}

With respect to Kara\c{s}an instances (Table~\ref{table01}), the average gaps referred to the solutions provided by LB-Benders', AMU and
LPH are up to, respectively, 7.47\%, 10.06\% and 5.79\% for the instances with 1000 vertices (see K-1000-20-0.9-5).
For the instances with 2000 vertices, the average gaps referred to the solutions provided by LB-Benders', AMU and
LPH are up to, respectively, 25.77\%, 22.13\% and 18.15\% (see K-2000-200-0.9-5).
Notice that the average gaps of AMU over all Kara\c{s}an instances is 6.65\%, while that of LPH is only 3.35\%.
In fact, the average gaps of the solutions provided by LPH are smaller than those of AMU for all the sets of instances considered.
It can also be observed that the smaller the value of $\omega$ is, the larger the average gaps achieved by the three algorithms are.
Furthermore, for the hardest instances (with $\omega = 5$), the average gaps of the solutions provided by LPH are smaller than or equal to those of LB-Benders'
(except for K-1000-200-0.5-5).

Regarding Coco instances (Table~\ref{table02}), it can be seen that the average optimality gaps referred to the solutions provided by
LB-Benders', AMU and LPH are at most, respectively, 11.50\%, 13.27\% and 8.19\% (see G-5x400-200-0.9).
Also in this case, the average gaps of the solutions provided by LPH are smaller than those of AMU for all the sets of instances (except for G-32x32-200-0.5).
The average optimality gap of LB-Benders' over all Coco instances is very small (1.15\%), while that of LPH is very close to this value (1.26\%).
It can also be observed that the instances based on 5x400 grids are much harder to solve than the other grid instances.
Moreover, the average relative gaps referred to the solutions provided by LPH is always smaller than those of LB-Benders'
for the hardest grid instances.

For both benchmarks, LPH clearly outperforms AMU in terms of the quality of the solutions obtained. In fact, the heuristic is even able to achieve better
upper bounds than LB-Benders' at solving some instance sets, especially the hardest ones.
Furthermore, LPH required significantly smaller computational effort than LB-Benders'.
Therefore, LPH arises as an effective and time efficient heuristic to solve Kara\c{s}an and Coco instances.
One may note that, although the bounds provided by AMU are not as good as those given by LPH, the 2-approximation procedure performs within
tiny average computational times when compared with LB-Benders', especially for the hardest instances.
Thus, the results suggest that AMU should be considered in the cases where time efficiency is preferable to the quality of the solutions.

Our computational results also suggest that, for both benchmarks, instances generated with a higher degree of uncertainty (in particular, $\delta = 0.9$)
tend to become more difficult to be handled by LB-Benders' and LPH. As pointed out in related works \cite{Averbakh13,Karasan01}, high $\delta$ values
can increase the occurrence of overlapping cost intervals and, thus, decrease the number of dominated arcs. As a consequence,
the task of finding a robust path becomes more difficult. Such behaviour does not apply to AMU, as it simply solves R-SP instances in specific scenarios
that do not rely on the amplitude of the cost intervals.

In Tables~\ref{table03} and \ref{table04}, we detail the improvement of LPH over AMU in terms of the bounds achieved for Kara\c{s}an and Coco instances,
respectively. The first column displays the name of each instance set, and the second one shows the number of times (out of 10) that LPH gives better bounds
than AMU. The third and fourth columns show, respectively, the minimum and the maximum percentage improvements
given by $100 \cdot \frac{UB_{amu} - UB_{lph}}{UB_{amu}}$, while the last two columns give the average and the standard deviation (over the 10 instances
in each set) of these values. Notice that the percentage improvement assumes a negative value whenever AMU gives a better bound than LPH.

Regarding Kara\c{s}an instances (Table~\ref{table03}), the percentage improvement is up to 21.47\% (see K-2000-20-0.5-25). In fact, AMU only overcomes
LPH on two instances from the whole benchmark, one from K-1000-20-0.9-25 and other from K-2000-200-0.9-25.
For the grid digraph instances (Table~\ref{table04}), the percentage improvement is up to 29.31\% (see G-32x32-20-0.5). Moreover, AMU gives better bounds
than LPH only on 13 out of the 240 instances tested. We could not identify a precise pattern on these instances that might indicate why AMU performs better. However, most of them belong to
instance sets based on square-like grids (such as 32x32 and 44x44). Thus, we conjecture that the balance between the number of lines and columns in the
corresponding grids plays a role in such behaviour.
\begin{landscape}
\begin{table}[!h]
\caption{Details on the improvement of LPH over AMU for the two benchmarks of R-RSP instances.}
\center
\subfloat[Layered and acyclic digraph instances.]
{
\label{table03}
\scalebox{0.7}{
\begin{tabular}{lrrrrr}
\toprule
\multicolumn{2}{r}{} & \multicolumn{4}{c}{\textbf{LPH Improvement}}\\
\cmidrule{3-6}
Test set & {\#Better} & {Min (\%)} & {Max (\%)} & {Avg (\%)} & {StDev (\%)}\\
\midrule
K-1000-20-0.5-5 & 10&0.00~~~~&9.79~~~~~~&4.10~~~~~&3.59~~~~~~~\\
K-1000-20-0.9-5 & 10&0.60~~~~&9.62~~~~~~&4.51~~~~~&3.22~~~~~~~\\
K-1000-200-0.5-5 & 10&0.68~~~~&8.15~~~~~~&3.52~~~~~&2.26~~~~~~~\\
K-1000-200-0.9-5 & 10&2.06~~~~&12.49~~~~~~&6.14~~~~~&3.28~~~~~~~\\
\midrule
K-1000-20-0.5-10 & 10&0.00~~~~&7.79~~~~~~&2.99~~~~~&2.56~~~~~~~\\
K-1000-20-0.9-10 & 10&0.00~~~~&12.10~~~~~~&4.52~~~~~&3.20~~~~~~~\\
K-1000-200-0.5-10 & 10&0.00~~~~&4.18~~~~~~&1.27~~~~~&1.51~~~~~~~\\
K-1000-200-0.9-10 & 10&1.33~~~~&12.16~~~~~~&4.45~~~~~&2.97~~~~~~~\\
\midrule
K-1000-20-0.5-25 & 10&0.00~~~~&11.90~~~~~~&3.12~~~~~&4.52~~~~~~~\\
K-1000-20-0.9-25 & 9&-2.47~~~~&10.39~~~~~~&2.10~~~~~&3.94~~~~~~~\\
K-1000-200-0.5-25 & 10&0.00~~~~&7.27~~~~~~&1.27~~~~~&2.71~~~~~~~\\
K-1000-200-0.9-25 & 10&0.00~~~~&5.68~~~~~~&1.13~~~~~&1.89~~~~~~~\\
\midrule
K-2000-20-0.5-5 & 10&0.41~~~~&9.88~~~~~~&4.29~~~~~&3.20~~~~~~~\\
K-2000-20-0.9-5 & 10&1.34~~~~&8.01~~~~~~&4.13~~~~~&2.09~~~~~~~\\
K-2000-200-0.5-5 & 10&0.39~~~~&7.15~~~~~~&3.86~~~~~&2.21~~~~~~~\\
K-2000-200-0.9-5 & 10&1.61~~~~&8.25~~~~~~&4.86~~~~~&2.22~~~~~~~\\
\midrule
K-2000-20-0.5-10 & 10&0.00~~~~&9.80~~~~~~&3.52~~~~~&2.81~~~~~~~\\
K-2000-20-0.9-10 & 10&1.48~~~~&6.72~~~~~~&3.80~~~~~&1.79~~~~~~~\\
K-2000-200-0.5-10 & 10&0.00~~~~&5.66~~~~~~&3.12~~~~~&1.95~~~~~~~\\
K-2000-200-0.9-10 & 10&1.04~~~~&8.19~~~~~~&4.23~~~~~&2.31~~~~~~~\\
\midrule
K-2000-20-0.5-25 & 10&0.00~~~~&21.57~~~~~~&5.98~~~~~&7.21~~~~~~~\\
K-2000-20-0.9-25 & 10&0.00~~~~&3.91~~~~~~&1.39~~~~~&1.60~~~~~~~\\
K-2000-200-0.5-25 & 10&0.00~~~~&6.46~~~~~~&1.49~~~~~&2.14~~~~~~~\\
K-2000-200-0.9-25 & 9&-0.14~~~~&7.20~~~~~~&2.68~~~~~&2.78~~~~~~~\\
\bottomrule
\end{tabular}
}
}
\subfloat[Grid digraph instances.]
{
\label{table04}
\scalebox{0.7}{
\begin{tabular}{lrrrrr}
\toprule
\multicolumn{2}{r}{} & \multicolumn{4}{c}{\textbf{LPH Improvement}}\\
\cmidrule{3-6}
Test set & {\#Better} & {Min (\%)} & {Max (\%)} & {Avg (\%)} & {StDev (\%)}\\
\midrule
G-32x32-20-0.5 & 10&0.00~~~~&29.31~~~~~~&3.94~~~~~&9.17~~~~~~~\\
G-32x32-20-0.9 & 9&-5.49~~~~&20.17~~~~~~&4.69~~~~~&7.95~~~~~~~\\
G-32x32-200-0.5 & 7&-11.21~~~~&0.00~~~~~~&-1.73~~~~~&3.71~~~~~~~\\
G-32x32-200-0.9 & 10&0.00~~~~&18.21~~~~~~&3.90~~~~~&5.89~~~~~~~\\
\midrule
G-20x50-20-0.5 & 10&0.00~~~~&18.89~~~~~~&2.29~~~~~&5.97~~~~~~~\\
G-20x50-20-0.9 & 9&-2.40~~~~&8.89~~~~~~&1.03~~~~~&3.14~~~~~~~\\
G-20x50-200-0.5 & 9&-8.41~~~~&15.57~~~~~~&1.70~~~~~&6.14~~~~~~~\\
G-20x50-200-0.9 & 9&-1.04~~~~&12.78~~~~~~&1.77~~~~~&4.13~~~~~~~\\
\midrule
G-5x200-20-0.5 & 10&0.00~~~~&14.34~~~~~~&5.20~~~~~&4.89~~~~~~~\\
G-5x200-20-0.9 & 10&2.58~~~~&8.01~~~~~~&4.29~~~~~&1.75~~~~~~~\\
G-5x200-200-0.5 & 10&0.00~~~~&11.14~~~~~~&4.54~~~~~&3.51~~~~~~~\\
G-5x200-200-0.9 & 10&1.22~~~~&12.15~~~~~~&7.70~~~~~&3.82~~~~~~~\\
\midrule
G-44x44-20-0.5 & 8&-3.95~~~~&5.77~~~~~~&0.15~~~~~&2.40~~~~~~~\\
G-44x44-20-0.9 & 9&-3.29~~~~&23.26~~~~~~&3.40~~~~~&7.95~~~~~~~\\
G-44x44-200-0.5 & 10&0.00~~~~&11.05~~~~~~&5.05~~~~~&4.02~~~~~~~\\
G-44x44-200-0.9 & 9&-0.47~~~~&7.45~~~~~~&2.56~~~~~&3.32~~~~~~~\\
\midrule
G-20x100-20-0.5 & 10&0.00~~~~&14.09~~~~~~&2.81~~~~~&5.26~~~~~~~\\
G-20x100-20-0.9 & 10&0.00~~~~&11.06~~~~~~&4.85~~~~~&4.25~~~~~~~\\
G-20x100-200-0.5 & 10&0.00~~~~&5.80~~~~~~&1.40~~~~~&1.80~~~~~~~\\
G-20x100-200-0.9 & 8&-0.66~~~~&7.65~~~~~~&3.17~~~~~&3.38~~~~~~~\\
\midrule
G-5x400-20-0.5 & 10&1.27~~~~&10.58~~~~~~&4.80~~~~~&2.87~~~~~~~\\
G-5x400-20-0.9 & 10&1.59~~~~&13.90~~~~~~&5.42~~~~~&3.65~~~~~~~\\
G-5x400-200-0.5 & 10&0.16~~~~&10.92~~~~~~&4.88~~~~~&3.63~~~~~~~\\
G-5x400-200-0.9 & 10&1.72~~~~&11.72~~~~~~&5.53~~~~~&3.26~~~~~~~\\
\bottomrule
\end{tabular}
}
}
\end{table}
\end{landscape}
\subsection{The Robust Set Covering problem (RSC)}
In all of the algorithms regarding RSC, we used CPLEX to handle the ILP formulation $\mathcal{I}_2$, defined by (\ref{i_rsc01})-(\ref{i_rsc03}), whenever a
classical SC instance had to be solved. We also used CPLEX to solve each master problem in LB-Benders' and the heuristic formulation $\mathcal{H}_2$,
defined by (\ref{h_rsc01})-(\ref{h_rsc05}).

\subsubsection{Benchmarks description}
In our experiments, we considered three benchmarks of instances from the literature of RSC, namely \emph{Beasley}, \emph{Montemanni} and
\emph{Kasperski-Zieli\'{n}ski} benchmarks \cite{Pereira11}. The three of them are based on classical SC instances from the OR-Library \cite{Beasley90}. However,
the way the column cost intervals are generated differs from benchmark to benchmark.

Regarding Beasley instances, let $\Phi_j$ represent the cost of a
column $j \in J$ in the original SC instance, and let $0 < \delta <1$ be a continuous value used to control the level of uncertainty referred to an RSC instance. For each $j \in J$ , the corresponding cost interval $[l_j,u_j]$ is
generated by uniformly selecting random integer values $l_{j}$ and $u_j$ in the ranges $[(1-\delta)\cdot \Phi_j,\Phi_j]$ and
 $[\Phi_j,(1+\delta)\cdot \Phi_j]$, respectively. These instances are named B.$<$\emph{SCinst}$>$-$\delta$, where $<$\emph{SCinst}$>$ stands for the name
of the original SC instance set considered. For each original instance of the classical SC, we considered three RSC instances, one for each value $\delta \in \{0.1, 0.3, 0.5\}$.
In total, 75 instances from Beasley benchmark were used in our experiments.

In Montemanni instances, the column costs of the original SC instances are discarded, and, for each column $j \in J$ , the corresponding cost interval
$[l_j,u_j]$ is generated as follows. First, a random integer value $u_{j}$ is uniformly
chosen in the range $[0,1000]$, and, then, a random integer value $l_j$ is uniformly selected in the range $[0, u_j]$.
These instances are named M.$<$\emph{SCinst}$>$-$1000$, where $<$\emph{SCinst}$>$ is the name
of the original SC instance set used. Each original SC instance considered gives the backbone to generate three RSC instances, and
a total of 75 instances from Montemanni benchmark were used in our experiments.

Kasperski-Zieli\'{n}ski benchmark also consider classical SC instances without the original column costs. In these instances,
the cost interval $[l_j,u_j]$ of each column $j \in J$  is generated as follows. First, a random integer value $l_{j}$ is uniformly
chosen in the range $[0,1000]$, and, then, a random integer value $u_j$ is uniformly selected in the range $[l_j, l_j + 1000]$.
These instances are named KZ.$<$\emph{SCinst}$>$-$1000$, where $<$\emph{SCinst}$>$ is the name
of the original SC instance set used. Each of the SC instances considered gives the backbone to generate three RSC instances, and
a total of 75 instances from Kasperski-Zieli\'{n}ski benchmark were used in our experiments.
In summary, 225 RSC instances were considered in the experiments.
\subsubsection{Results}
Table~\ref{table05} shows the computational results regarding the three benchmarks of RSC instances considered.
The first column displays the name of each instance set, while the second one gives (i) the number of instances solved at optimality by LB-Benders' within
3600 seconds, over (ii) the cardinality of the corresponding instance set. The average wall-clock processing time (in seconds) spent in solving these
instances at optimality are given in the third column. This entry is filled with a dash if no instance in the group was solved at optimality.
The fourth and fifth columns show, respectively, the average and the standard deviation (over all the instances in each set) of the relative optimality
gaps given by $100 \cdot \frac{UB_b - LB_b}{UB_b}$. Recall that $LB_b$ and $UB_b$ are, respectively, the best lower and upper bounds obtained by LB-Benders'
for a given instance. The sixth column displays the average wall-clock processing time (in seconds) of AMU. The seventh column shows the average
(over all the instances in each set) of the relative gaps given by $100 \cdot \frac{UB_{amu} - LB_b}{UB_{amu}}$, where $UB_{amu}$ is the best upper bound
obtained by AMU for a given instance. The standard deviation of these gaps is given in the eighth column.
Likewise, the ninth column shows the average wall-clock processing time of LPH, and the last two columns give the average and the standard deviation
(once again, over all the instances in each set) of the gaps given by $100 \cdot \frac{UB_{lph} - LB_b}{UB_{lph}}$.
Accordingly, $UB_{lph}$ is the robustness cost of the solution obtained by LPH for a given instance.

As for R-RSP, the gaps referred to the solutions obtained by AMU and LPH consider the (not necessarily optimal) lower bounds obtained by LB-Benders'
within 3600 seconds of execution. Thus, these gaps may overestimate the actual gaps between the cost of the solutions obtained and the cost of optimal ones.

From the results, it can be seen that the average gaps of the solutions provided by LPH are smaller than those referred to AMU for all the sets of instances
tested.
Furthermore, for the hardest instances (Kasperski-Zieli\'{n}ski benchmark), the average gaps of the solutions provided by LPH are even smaller than
those of LB-Benders'. As discussed in \cite{Pereira11}, Kasperski-Zieli\'{n}ski instances are especially challenging, mostly because of the high
probability of overlap between arbitrary cost intervals in these instances.
\begin{landscape}
\begin{table}[ht]
\caption{Computational results for the three benchmarks of RSC instances.}
\label{table05}
\center
\scalebox{0.7}{
\begin{tabular}{lrrrrrrrrrrrr}
\toprule
\multicolumn{1}{r}{} & \multicolumn{4}{c}{\textbf{LB-Benders'}} & & \multicolumn{3}{c}{\textbf{AMU}} & & \multicolumn{3}{c}{\textbf{LPH}}\\
\cmidrule{2-5} \cmidrule{7-9} \cmidrule{11-13}
Beasley set & {\#opt} & {Time (s)} & {AvgGAP (\%)} & {StDev (\%)} & & Time (s) & {AvgGAP (\%)} & {StDev (\%)} & & Time (s) & {AvgGAP (\%)} & {StDev (\%)}\\
\midrule
B.scp4-0.1 & 10/10 & 1.03~~ &0.00~~~~~~ & 0.00~~~~ & & 0.23~~ & 4.24~~~~~~  & 4.92~~~~ & &0.21~~ &0.59~~~~~~  &1.86~~~~~\\
B.scp5-0.1 & 10/10 & 3.32~~ &0.00~~~~~~ & 0.00~~~~ & & 0.34~~ & 9.01~~~~~~  & 8.27~~~~ & &0.42~~ &2.60~~~~~~  &3.62~~~~~\\
B.scp6-0.1 & 5/5 & 2.23~~ &0.00~~~~~~ & 0.00~~~~ & & 0.88~~ & 8.02~~~~~~  & 7.41~~~~ & &0.64~~ &1.43~~~~~~  &3.19~~~~~\\
B.scp4-0.3 & 10/10 & 9.82~~ &0.00~~~~~~ & 0.00~~~~ & & 0.24~~ & 5.24~~~~~~  & 3.46~~~~ & &0.44~~ &0.98~~~~~~  &1.12~~~~~\\
B.scp5-0.3 & 10/10 & 28.23~~ &0.00~~~~~~ & 0.00~~~~ & & 0.34~~ & 6.21~~~~~~  & 2.61~~~~ & &0.92~~ &1.78~~~~~~  &2.62~~~~~\\
B.scp6-0.3 & 5/5 & 12.18~~ &0.00~~~~~~ & 0.00~~~~ & & 1.08~~ & 2.44~~~~~~  & 2.70~~~~ & &1.05~~ &1.63~~~~~~  &1.51~~~~~\\
B.scp4-0.5 & 10/10 & 93.98~~ &0.00~~~~~~ & 0.00~~~~ & & 0.29~~ & 3.56~~~~~~  & 2.12~~~~ & &0.64~~ &1.20~~~~~~  &1.49~~~~~\\
B.scp5-0.5 & 10/10 & 94.20~~ &0.00~~~~~~ & 0.00~~~~ & & 0.34~~ & 5.38~~~~~~  & 4.33~~~~ & &1.21~~ &0.87~~~~~~  &1.00~~~~~\\
B.scp6-0.5 & 5/5 & 22.57~~ &0.00~~~~~~ & 0.00~~~~ & & 1.00~~ & 4.08~~~~~~  & 4.21~~~~ & &1.16~~ &1.44~~~~~~  &2.23~~~~~\\
\cmidrule{1-13}
\textbf{Average}       & & & 0.00~~~~~~ & 0.00~~~~ & & & 5.35~~~~~~ & 4.45~~~~~ & & &1.39~~~~~~ &2.07~~~~~ \\
\midrule
\\
\midrule
\multicolumn{1}{r}{} & \multicolumn{4}{c}{\textbf{LB-Benders'}} & & \multicolumn{3}{c}{\textbf{AMU}} & & \multicolumn{3}{c}{\textbf{LPH}}\\
\cmidrule{2-5} \cmidrule{7-9} \cmidrule{11-13}
Montemanni set & {\#opt} & {Time (s)} & {AvgGAP (\%)} & {StDev (\%)} & & Time (s) & {AvgGAP (\%)} & {StDev (\%)} & & Time (s) & {AvgGAP (\%)} & {StDev (\%)}\\
\midrule
M.scp4-1000 & 27/30 & 598.94~~ &0.04~~~~~~ & 0.14~~~~ & & 0.23~~ & 1.13~~~~~~  & 0.78~~~~ & &0.61~~ &0.05~~~~~~  &0.16~~~~~\\
M.scp5-1000 & 30/30 & 481.98~~ &0.00~~~~~~ & 0.00~~~~ & & 0.21~~ & 1.00~~~~~~  & 0.90~~~~ & &0.70~~ &0.01~~~~~~  &0.04~~~~~\\
M.scp6-1000 & 15/15 & 13.90~~ &0.00~~~~~~ & 0.00~~~~ & & 0.42~~ & 0.78~~~~~~  & 0.94~~~~ & &0.80~~ &0.06~~~~~~  &0.14~~~~~\\
\cmidrule{1-13}
\textbf{Average}       & & & 0.01~~~~~~ & 0.05~~~~ & & & 0.97~~~~~~ & 0.87~~~~~ & & &0.04~~~~~~ &0.11~~~~~\\
\midrule
\\
\midrule
\multicolumn{1}{r}{} & \multicolumn{4}{c}{\textbf{LB-Benders'}} & & \multicolumn{3}{c}{\textbf{AMU}} & & \multicolumn{3}{c}{\textbf{LPH}}\\
\cmidrule{2-5} \cmidrule{7-9} \cmidrule{11-13}
Kasperski-Zieli\'{n}ski set & {\#opt} & {Time (s)} & {AvgGAP (\%)} & {StDev (\%)} & & Time (s) & {AvgGAP (\%)} & {StDev (\%)} & & Time (s) & {AvgGAP (\%)} & {StDev (\%)}\\
\midrule
K.scp4-1000 & 0/30 & 0.00~~ &14.25~~~~~~ & 2.93~~~~ & & 3.08~~ & 14.27~~~~~~  & 2.73~~~~ & &124.86~~ &12.90~~~~~~  &2.62~~~~~\\
K.scp5-1000 & 0/30 & 0.00~~ &8.63~~~~~~ & 3.45~~~~ & & 2.90~~ & 8.63~~~~~~  & 3.45~~~~ & &44.93~~ &7.99~~~~~~  &3.01~~~~~\\
K.scp6-1000 & 3/15 & 1881.73~~ &2.95~~~~~~ & 3.13~~~~ & & 11.96~~ & 4.04~~~~~~  & 2.90~~~~ & &115.31~~ &2.77~~~~~~  &2.82~~~~~\\
\cmidrule{1-13}
\textbf{Average}       & & & 8.61~~~~~~ & 3.17~~~~ & & & 8.98~~~~~~ & 3.03~~~~~ & & &7.89~~~~~~ &2.82~~~~~ \\
\bottomrule
\end{tabular}
}
\end{table}
\end{landscape}

Notice that LPH clearly outperforms AMU in terms of the quality of the solutions obtained and requires significantly smaller computational effort than LB-Benders'.
In fact, for most of the instance sets considered, the average execution times of LPH are comparable to those of AMU.
Therefore, LPH arises as an effective and time efficient heuristic in solving the three benchmarks of RSC instances tested.
Nevertheless, we highlight that AMU should still be considered in the cases where time efficiency is preferable to the
quality of the solutions, especially for the more challenging instances.
\begin{table}[!h]
\caption{Details on the improvement of LPH over AMU for the three benchmarks of RSC instances.}
\label{table06}
\center
\scalebox{0.7}{
\begin{tabular}{lrrrrr}
\toprule
\multicolumn{2}{r}{} & \multicolumn{4}{c}{\textbf{LPH Improvement}}\\
\cmidrule{3-6}
Test set & {\#Better} & {Min (\%)} & {Max (\%)} & {Avg (\%)} & {StDev (\%)}\\
\midrule
B.scp4-0.1 & 10/10&0.00~~~~&12.50~~~~~~&3.66~~~~~&5.05~~~~~~~\\
B.scp5-0.1 & 10/10&0.00~~~~&20.83~~~~~~&6.53~~~~~&8.33~~~~~~~\\
B.scp6-0.1 & 5/5&0.00~~~~&15.00~~~~~~&6.69~~~~~&6.79~~~~~~~\\
B.scp4-0.3 & 10/10&0.00~~~~&11.76~~~~~~&4.29~~~~~&3.57~~~~~~~\\
B.scp5-0.3 & 10/10&0.00~~~~&8.06~~~~~~&4.47~~~~~&2.98~~~~~~~\\
B.scp6-0.3 & 4/5&-2.50~~~~&6.52~~~~~~&0.80~~~~~&3.37~~~~~~~\\
B.scp4-0.5 & 10/10&0.52~~~~&5.36~~~~~~&2.39~~~~~&1.42~~~~~~~\\
B.scp5-0.5 & 9/10&-1.52~~~~&13.64~~~~~~&4.52~~~~~&5.06~~~~~~~\\
B.scp6-0.5 & 5/5&0.00~~~~&6.58~~~~~~&2.70~~~~~&2.94~~~~~~~\\
\midrule
M.scp4-1000 & 30/30&0.17~~~~&3.30~~~~~~&1.08~~~~~&0.79~~~~~~~\\
M.scp5-1000 & 30/30&0.00~~~~&3.51~~~~~~&0.99~~~~~&0.88~~~~~~~\\
M.scp6-1000 & 15/15&0.00~~~~&2.87~~~~~~&0.72~~~~~&0.94~~~~~~~\\
\midrule
KZ.scp4-1000 & 29/30&-0.04~~~~&4.59~~~~~~&1.56~~~~~&1.33~~~~~~~\\
KZ.scp5-1000 & 29/30&-0.08~~~~&3.13~~~~~~&0.71~~~~~&0.80~~~~~~~\\
KZ.scp6-1000 & 14/15&-0.13~~~~&4.25~~~~~~&1.29~~~~~&1.61~~~~~~~\\
\bottomrule
\end{tabular}
}
\end{table}

In Table~\ref{table06}, we detail the improvement of LPH over AMU in terms of the bounds achieved for the three benchmarks of RSC instances.
The first column displays the name of each instance set, and the second one shows (i) the number of times LPH gives better bounds
than AMU, over (ii) the cardinality of the corresponding instance set. The third and fourth columns show, respectively, the minimum and the maximum percentage improvements
given by $100 \cdot \frac{UB_{amu} - UB_{lph}}{UB_{amu}}$, while the last two columns give the average and the standard deviation (over all the instances
in each set) of these values. These percentage improvements assume negative values whenever AMU gives better bounds than LPH.

Regarding Beasley instances, the percentage improvement is up to 20.83\% (see B.scp5-0.1). In fact, AMU only overcomes
LPH on two instances from the whole benchmark, one from B.scp6-0.3 and other from B.scp5-0.5.
With respect to Montemanni instances, the percentage improvement is up to 3.51\% (see M.scp5-1000), and LPH always gives better bounds
than AMU. Regarding Kasperski-Zieli\'{n}ski instances, the percentage improvement is at most 4.59\% (see KZ.scp4-1000), and
AMU overcomes LPH on three instances, one from each instance set.

We could not identify any specific pattern on the instances for which AMU outperforms LPH. However, such behaviour is only verified in
5 out of the 225 instances tested.

\section{Concluding remarks}
\label{s_remarks}

In this study, we proposed a novel heuristic approach to address the class of interval robust-hard problems, which is composed of
robust optimization versions of classical NP-hard combinatorial problems.
We applied this new technique to two interval robust-hard problems, namely the Restricted Robust Shortest Path problem (R-RSP)
and the Robust Set Covering problem (RSC). To our knowledge, the former problem was also introduced in this study. 

In order to evaluate the quality of the solutions obtained by our heuristic (namely LPH), we adapted to R-RSP and to RSC a logic-based Benders'
decomposition algorithm (referred to as LB-Benders'), which is a state-of-the-art exact method to solve interval data min-max regret problems in general.
We also compared the results obtained by the proposed heuristic with a widely used 2-approximation procedure, namely AMU. 

Regarding R-RSP, the computational experiments showed the heuristic's effectiveness in solving the two benchmarks of instances considered.
For the layered and acyclic digraph instances, the average relative gaps of LB-Benders' and AMU were, respectively, 4.49\% and 6.65\%, while that of LPH was
only 3.35\%. For the grid digraph instances, the average relative gaps of LB-Benders', AMU and LPH were, respectively, 1.15\%, 4.69\% and 1.26\%. Moreover,
the average processing times of LPH were much smaller than those of LB-Benders' for both benchmarks considered and remain competitive with those referred to AMU for most of the instances.
In addition, LPH was able to provide better bounds than AMU for 465 out of the 480 R-RSP instances tested.

With respect to RSC, the results show that the average gaps of the solutions provided by LPH were smaller than those referred to AMU for all the sets of
instances from the three benchmarks used in the experiments. Moreover, LPH achieved better primal bounds than LB-Benders' for the
hardest instances (Kasperski-Zieli\'{n}ski benchmark) and was able to provide better bounds than AMU for 220 out of the 225 RSC instances tested.

The results point out to the fact that the proposed heuristic framework may also be efficient in finding optimal or near-optimal solutions for
other interval robust-hard problems, which can be explored in future studies.
Other directions of future work remain open. For instance, research can be conducted to close the conjecture that the heuristic framework can be
applied to the wider class of interval data min-max regret problems with compact constraint sets addressed in \cite{Conde2010,Conde2012}.
Future works can also add to our framework local search strategies, such as Local Branching \cite{Fischetti03}, to further improve the quality of the
solutions obtained by the heuristic.
\section*{Acknowledgments}
This work was partially supported by the Brazilian National Council
for Scientific and Technological Development (CNPq), the Foundation
for Support of Research of the State of Minas Gerais, Brazil
(FAPEMIG), and Coordination for the Improvement of Higher Education
Personnel, Brazil (CAPES).
\section*{References}
 \bibliographystyle{elsarticle-num}
 \bibliography{main}


\end{document}